\newtheorem*{assumption*}{\assumptionnumber}
\newcolumntype{P}[1]{>{\centering\arraybackslash}p{#1}}
\renewcommand{\it}[1]{\textit{#1}}
\newcommand{\alg}{{CROSS-JEM}\xspace}
\newcommand{\sce}{{monoBERT}\xspace}
\newcommand{\algtitle}{CROSS-JEM: Accurate and Efficient Cross-encoders for Short-text Ranking Tasks}
\newcommand{\rpllong}{{Ranking Probability Loss}\xspace}
\newcommand{\rplshort}{{RPL}\xspace}
\definecolor{arsenic}{rgb}{0.23, 0.27, 0.29}
\definecolor{silver}{rgb}{0.75, 0.75, 0.75}
\definecolor{ForestGreen}{HTML}{228B22}
\definecolor{Maroon}{HTML}{800000}
\newcommand{\good}[1]{{\textcolor{ForestGreen}{#1}}}
\newcommand{\bad}[1]{{\textcolor{Maroon}{#1}}}
\newcommand{\meh}[1]{{\textcolor{arsenic}{#1}}}
\newtheorem*{theorem*}{Theorem}
\newtheorem*{lemma*}{Lemma}
\newtheorem*{corollary*}{Corollary}
\def\eqref#1{equation~(\ref{#1})}
\def\Eqref#1{Equation~(\ref{#1})}
\def\1{\bm{1}}
\def\rve{{\mathbf{e}}}
\def\rvk{{\mathbf{k}}}
\def\rvq{{\mathbf{q}}}
\def\rvt{{\mathbf{t}}}
\def\vtheta{{\bm{\theta}}}
\def\ve{{\bm{e}}}
\def\vf{{\bm{f}}}
\def\vk{{\bm{k}}}
\def\vp{{\bm{p}}}
\def\vq{{\bm{q}}}
\def\vs{{\bm{s}}}
\def\vu{{\bm{u}}}
\def\vw{{\bm{w}}}
\def\vx{{\bm{x}}}
\def\vy{{\bm{y}}}
\DeclareMathAlphabet{\mathsfit}{\encodingdefault}{\sfdefault}{m}{sl}
\SetMathAlphabet{\mathsfit}{bold}{\encodingdefault}{\sfdefault}{bx}{n}
\def\gE{{\mathcal{E}}}
\def\gT{{\mathcal{T}}}
\def\sI{{\mathbb{I}}}
\def\sK{{\mathbb{K}}}
\def\sL{{\mathbb{L}}}
\def\sP{{\mathbb{P}}}
\def\sQ{{\mathbb{Q}}}
\def\sR{{\mathbb{R}}}
\def\sS{{\mathbb{S}}}
\def\sT{{\mathbb{T}}}
\begin{document}

\title{\algtitle}

\author{Bhawna Paliwal} \authornote{Equal contribution} \authornote{Corresponding authors}
\email{bhawna@microsoft.com}
\affiliation{%
  \institution{Microsoft Research}
  \country{Bengaluru, India}
}

\author{Deepak Saini} \authornotemark[1] \authornotemark[2]
\email{desaini@microsoft.com}
\affiliation{%
  \institution{Microsoft}
  \country{Redmond, USA}
}

\author{Mudit Dhawan} \authornote{Work done while at Microsoft Research}
\email{muditd2000@gmail.com}
\affiliation{%
  \institution{Carnegie Mellon University}
  \country{Pittsburgh, USA}
}

\author{Siddarth Asokan} \authornotemark[2]
\email{Siddarth.Asokan@microsoft.com}
\affiliation{%
  \institution{Microsoft Research}
  \country{Bengaluru, India}
}

\author{Nagarajan Natarajan}
\email{Nagarajan.Natarajan@microsoft.com}
\affiliation{%
  \institution{Microsoft Research}
  \country{Bengaluru, India}
}

\author{Surbhi Aggarwal}
\email{Surbhi.Aggarwal@microsoft.com}
\affiliation{%
  \institution{Microsoft}
  \country{Bengaluru, India}
}

\author{Pankaj Malhotra}
\email{pamalhotra@microsoft.com}
\affiliation{%
  \institution{Microsoft}
  \country{Bengaluru, India}
}

\author{Jian Jiao}
\email{Jian.Jiao@microsoft.com}
\affiliation{%
  \institution{Microsoft}
  \country{Redmond, USA}
}

\author{Manik Varma}
\email{manik@microsoft.com}
\affiliation{%
  \institution{Microsoft Research}
  \country{Bengaluru, India}
}

\renewcommand{\shortauthors}{Paliwal and Saini, et al.}

\keywords{sponsored search, information retrieval, keyword scoring, keyword ranking, cross encoders, efficiency, production systems, online systems, industrial applications, large-scale learning}

\begin{abstract}

Ranking a set of items based on their relevance to a given query is a core problem in search and recommendation. Transformer-based ranking models are the state-of-the-art approaches for such tasks, but they score each query-item independently, ignoring the joint context of other relevant items. This leads to sub-optimal ranking accuracy and high computational costs. In response, we propose Cross-encoders with Joint Efficient Modeling (\alg), a novel ranking approach that enables transformer-based models to jointly score multiple items for a query, maximizing parameter utilization. \alg leverages (a) redundancies and token overlaps to jointly score multiple items, that are typically short-text phrases arising in search and recommendations, and (b) a novel training objective that models ranking probabilities. \alg achieves state-of-the-art accuracy and over 4x lower ranking latency over standard cross-encoders. Our contributions are threefold: (i) we highlight the gap between the ranking application's need for scoring thousands of items per query and the limited capabilities of current cross-encoders; (ii) we introduce \alg for joint efficient scoring of multiple items per query; and (iii) we demonstrate state-of-the-art accuracy on standard public datasets and a proprietary dataset. \alg opens up new directions for designing tailored early-attention-based ranking models that incorporate strict production constraints such as item multiplicity and latency. 
\end{abstract}

\maketitle

\section{Introduction}
\label{sec:intro}

We consider the problem of ranking that arises in search and recommendation pipelines, wherein the goal is to rank a set of items based on their relevance to a given query. Our work is in the context of two-stage \textit{retrieve-then-rank} pipelines in modern search engines~\citep{liu2017cascade,nogueira2020passage,zhao2022densesurvey,lin2021pretrained1,zhou2022towards,fan2022pre} as depicted in Figure \ref{fig:multi-stage}. Given a \textit{query}, i.e., a search phrase such as \textit{``patagonia japan''}, the retrieval stage pares the \textit{items}, i.e., keywords such as \textit{``clothing store japan''}, \textit{``patagonia homes shimoda'' }that are bid against for displaying ads, from billions to a few hundreds~\citep{guo2022semantic,matveeva2006high} of potential interest. These items are subsequently provided as the input to the ranking stage. In this work, we focus on the ranking model, given a black-box retriever. We consider short-text items (as in the above example), which appear in a myriad of recommendation systems applications such as product recommendation, query to advertiser bid phrase recommendation, and Wikipedia category tagging~\citep{Dahiya21,Zhou19,wikipedia06}. In designing the ranking model, two key axes are the model architecture, and the choice of the loss function, while the key performance metrics for such systems are accuracy and inference latency. \par
% In the transformer era, 
\noindent \textbf{Ranking architectures and limitations:} Along the architecture vertical, \textit{encoders} that employ stacked attention layers to encode a query-item pair, followed by a \textit{classifier} to predict ranking scores are widely adopted for ranking~\citep{nogueira2020passage,nogueira2019multi,zhou2022towards}.
Recently, sequence-to-sequence models with encoder-decoder and decoder-only architectures have also been proposed for ranking. These models provide a ranking score based on particular vocabulary-token logits~\citep{nogueira-etal-2020-document,rankt5,zhang-etal-2024-two}. 
However, these approaches model ranking as a \textit{pointwise} task, providing ranking scores for a given query and item pair.
But ranking is inherently a list-based task that requires scoring query-item pairs relative to one another, and not in isolation. In particular, the other items in the list to be ranked for a given query provide crucial context for scoring. Pointwise models neglect the list context, produce independent scores that may not reflect the optimal ranking order, and are difficult to calibrate across items for sorting to provide final rankings~\citep{qin-etal-2024-large}.

Table~\ref{table_IntroCompares} illustrates this by juxtaposing the top-5 ranked items obtained using our proposed approach (listwise modeling) and a baseline pointwise ranking model ~\citep{nogueira2020passage}. We observe that relatively more \textit{generic} items such as ``\textit{mexican cuisine}'', although relevant to the query \textit{``different foods of
oaxaca mexico''}, are ranked higher in baseline predictions, owing to (a) their frequency in training data, (b) token-level matching and other biases which are difficult to mitigate in pointwise modeling. On the other hand, our proposed listwise approach evaluates all the items to be ranked holistically, and subsequently ranks more specific (not just relevant) items higher.

Furthermore, pointwise transformer based models~\citep{nogueira2020passage,rankt5} are computationally expensive and impractical for real-time ranking systems that need to handle large-scale traffic requiring low latency and high throughput. 
Therefore, many industrial systems resort to using simpler sparse neural networks~\citep{MicrosoftResearch2023} or late-interaction models~\citep{lu2020twinbert,khattab2020colbert} for online ranking, sacrificing accuracy for latency.

\begin{figure*}[!htb]
\centering
\includegraphics[width=1.0\linewidth]{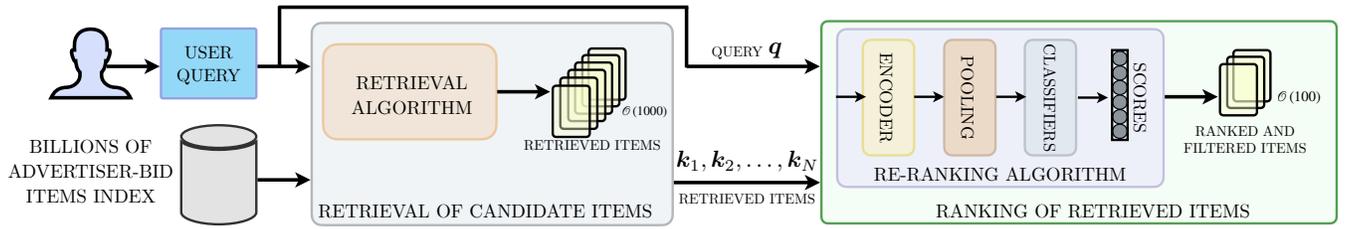}
\caption{The two-stage large-scale search and recommendation pipeline comprises: $(i)$ candidate selection from billions of items; $(ii)$ re-ranking of retrieved items. \alg works at stage $(ii)$, to output the score for all \(N\) item in a single pass.}
\label{fig:multi-stage}
\end{figure*}

\setlength{\tabcolsep}{4pt}
\begin{table*}[!htb]
    \caption{A comparison of \alg{'s} listwise modeling and a pointwise ranking model ~\citep{nogueira2020passage}: Relatively more \textit{generic} (but relevant) items are ranked higher in baseline predictions, owing to biases such as high frequency in the training data. Our listwise model evaluates all the shortlisted items in a single forward pass, and ranks more specific (and relevant) items higher.}
    \label{table_IntroCompares}
      \centering
      % \scriptsize
    \fontsize{6.5}{10}\selectfont
      %This will fit without resizebox.
    \begin{tabular}{P{.095\linewidth}|P{.41\linewidth}P{.445\linewidth}}
      \toprule
      \textbf{Query} & \textbf{Top-5 ranked items in the Proposed Approach (\alg)}  & \textbf{Top-5 ranked items in the baseline Cross Encoder}   \\
       \midrule 
       different foods of oaxaca mexico & \good{`the foods of oaxaca'}, \good{`oaxacan cuisine'}, \good{`exploring oaxacan food'}, \good{`authentic recipes from oaxaca, mexico'}, \meh{`6 things you'll love about oaxaca'} & \meh{`culinary tales: the kinds of food mexicans eat'}, \meh{`mexican christmas foods'}, \meh{`mexican cuisine'}, \meh{`culture: food and eating customes in mexico'}, \meh{`popular food in mexico'},  \\
       \midrule 
       what is the bovine growth hormone & \good{`recombinant bovine growth hormone'}, \good{`what is rbgh?'}, \good{`rbgh'}, \meh{`bovine growth hormone and milk : what you need to know'}, \good{`what is rbst?'} & \meh{`growth hormone'}, \bad{`human growth hormone'}, \bad{`alternative names for growth hormone'}, \bad{`human growth hormone and insulin are friends'}, \bad{`growth hormone (somatotropin)'}\\
      \bottomrule
  \end{tabular}
\end{table*}
Another line of research has been along the loss function, wherein training with listwise loss functions~\citep{reranker-cmu,rankt5,cao2007learning} have shown accuracy gains. These listwise losses optimize the models for a list of items to be ranked, rather than for individual query-item pairs, and can enhance the ranking performance without increasing the model size or complexity. However, these model architectures still operate at the query-item (pointwise) level, and produce independent scores for each item in the list, without explicitly modeling the inter-item dependencies or the query context. 
Recent works also incorporate listwise modeling via pre-trained LLMs for ranking~\citep{sun2023chatgpt,pradeep2023rankzephyr,qin2024large,zhang-etal-2024-two}. 
These approaches typically work with pre-trained large-scale models, focus on specifying all ranking items in the prompt and employing prompt engineering methods to improve accuracy and minimize LLM inference calls. 
Due to the huge parameter count (running into a few billions), these models cannot be deployed in large-scale online ranking systems. \textbf{In this paper, we aim to bridge this gap by proposing a ranking model that works at the list level, explicitly models inter-item interactions, and achieves a superior latency-accuracy tradeoff, making it deployable in real-time scenarios.} \par

\noindent \textbf{The Proposed Approach}: We propose an end-to-end joint ranking approach that models the listwise ranking of the query and given items, capturing both the query-item and the item-item interactions and satisfying the strict latency requirements of industry-scale recommendation systems. Our approach, entitled \textbf{\alg} (\textbf{CROSS}-Encoder with \textbf{J}oint \textbf{E}fficient \textbf{M}odeling), leverages the list structure of the input in both the encoder and the classifier components, as well as in the training objective. Specifically, our encoder is a transformer-based architecture that processes a query and a list of items to rank in a single pass, generating \textit{`list-aware'} context vectors for each input token. To achieve this, each item token attends to all other items in the list, as well as to the query, allowing the encoder to capture the joint relevance of all items. This is followed by a classifier which is jointly trained with the encoder. While any standard loss function (e.g., binary cross entropy) could be used to train a \alg model, in this work, we propose a novel variant of the listwise loss functions~\citep{cao2007learning,qin2008query-level}, \rpllong{} (\rplshort); which can be interpreted as a divergence between the predicted probabilities and the estimated ranking probabilities (as a function of model logits and ground truth rankings) of items to be ranked. The proposed listwise loss works much better in conjunction with our joint modeling than standard pointwise or listwise losses~\citep{cao2007learning,qin2008query-level,NIPS2009_2f55707d}. To the best of our knowledge, we are the first to propose a joint ranking approach that can effectively model listwise ranking in both the model architecture and the training objective with real-time latency constraints.

\alg is able to support low latency applications (few ms) by significantly reducing the computational costs of cross-attention across query and the list of ranking items.
This is achieved by exploiting the presence of token duplicates in the retrieved set of items for a given query. 
\alg exploits this redundancy to sidestep processing long sequences, thereby keeping the inference latency small. Further, jointly obtaining the ranking scores for a list of items avoids multiple calls to the expensive encoder (unlike pointwise approaches) making \alg significantly faster.

In summary, our key contribution is the introduction of a novel \textbf{joint ranking approach \alg}, that scores multiple items per query in a single pass, exploiting token interaction across items for better accuracy and token redundancies for better efficiency.
We demonstrate the effectiveness of \alg on two public benchmark datasets for short-text re-ranking, wherein it outperforms the best-performing baselines by at-least 3\% in terms of MRR. When applied to large-scale search-based recommendation, \alg demonstrated a 13\% higher accuracy than state-of-the-art models, while being over 6$\times$ faster than standard cross-encoders~\citep{nogueira2020passage}. 
We also deploy \alg for real-time ranking on live traffic, where it reduces the quick-back-rate by 1.8\%, indicating improved relevance of ads to users. 
Our work presents a general, scalable framework for joint ranking of multiple items across various domains and short-text tasks, accounting for ranking under task-specific constraints such as item multiplicity and latency.

\section{Background and Related Work}\label{sec:related_work}

\textbf{Ranking Architecture}: State-of-the-art transformer based models with stacked attention layers are typically encoder based, such as monoBERT and Birch~\citep{nogueira2020passage,akkalyoncu-yilmaz-etal-2019-cross}. These models encode query-passage pairs with a bi-directional attention encoder and use a classifier to obtain a ranking score. Alternatively, sequence-to-sequence models, such as monoT5~\citep{nogueira-etal-2020-document} leverage the pre-training knowledge of generative models for ranking. These models generate a ranking score from a specific vocabulary token in the decoder output. Another line of work explores decoder-only models, such as llama- and GPT-based models and rely on their extensive pre-training knowledge and parameter count for ranking~\citep{zhang-etal-2024-two,fine-tune-llama}. Despite their advantages, decoder-only models fine-tuned on large ranking corpora~\citep{zhang-etal-2024-two} do not outperform the fine-tuned encoder-based and sequence-to-sequence models on short-text ranking tasks (cf. Section~\ref{sec:exps}). \alg is the first joint-ranking approach that effectively incorporates listwise ranking into the model architecture and training objective while maintaining latency constraints.\\
\noindent \textbf{Joint (Listwise) Ranking}: 
Ranking inherently involves comparing a list of items; thus, recent works employ listwise loss functions in encoder or encoder-decoder models to enhance ranking accuracy. 
~\citet{gao2021rethink} devised a listwise multi-class cross-entropy loss to optimize ranking probabilities in encoder models. 
~\citet{rankt5} introduced a ranking-specific listwise cross-entropy loss to improve performance of Seq2Seq models for ranking. They also demonstrated that an expanded form of cross-entropy loss (poly-1) achieved superior performance across various ranking metrics.
Although listwise losses improved ranking accuracy, the model architectures of these rankers remain pointwise and do not fully capture the ranking task. 
\alg ranks multiple items per query in one pass, capturing item-item interactions in the ranking list to improve accuracy. It is optimized with \rpllong (\rplshort), a novel variant of the listwise ranking loss (as considered in ListNet~\citep{cao2007learning}) aligned with the \alg architecture, and estimates the ranking probabilities using target relevance labels and predicted model logits. \\
\textbf{LLMs for Ranking}: LLMs have emerged as a powerful tool for ranking tasks; they can leverage pre-trained knowledge, large parameter count, and effective prompting to achieve superior performance. Existing works have adopted two main approaches to exploit LLMs for ranking: (a) Using LLMs directly as re-rankers by designing novel prompting schemes~\citep{ma2023zero, sun2023chatgpt, pradeep2023rankzephyr} and sorting strategies~\citep{qin2024large, zhuang2024setwise}; the high-level idea is to encode the ranking items and the query into a single input and generate a ranked list as output, using a sliding window technique to handle long inputs; (b) Using LLMs for learning more accurate smaller ranking models \citep{sun2023chatgpt, pradeep2023rankvicuna,zhang-etal-2024-two, ma2024fine} via standard distillation or via augmenting the training data with synthetic samples generated by the LLM. These works have demonstrated that LLMs can outperform small-scale supervised methods~\citep{nogueira2020passage, nogueira-etal-2020-document,qin2024large} on various ranking benchmarks. Deploying LLMs at scale for real-time serving scenarios is still challenging due to their high computational costs and memory requirements. But using LLMs to enhance smaller ranking models, as in (b), is more practical and scalable --- this can be applied to our proposed \alg as well.

\noindent \textbf{Efficiency and Scaling}: 
Another closely related area is the focus on improving the efficiency and scaling of transformer based rankers (and retrievers) using light-weight architectures. 
Approaches employing early-attention (such as monoBERT~\citep{nogueira2020passage}) have shown high ranking accuracies, but cannot support the low-latency requirements of online raking applications in industry-scale recommendation systems. 
This can be attributed to the requirement of making multiple calls to the expensive transformer-based encoder to rank a list of items per query, which is infeasible in a few milliseconds.
Therefore, online production systems use a variation of sparse neural networks, namely  MEB~\citep{MicrosoftResearch2023}, for ranking thousands of items in real-time.
Late-interaction models such as ColBERT~\citep{khattab2020colbert}, Baleen~\citep{khattab2021baleen} and TwinBERT~\citep{lu2020twinbert} are also used for online ranking due to their computational efficiency. 
These models reduce the computational costs by applying a late-interaction layer over query-item embeddings. This comes at the price of accuracy, owing to the lack of interactions between query and item tokens. They also incur high storage overheads in online settings, as they need to store and retrieve token-level embeddings. 
Another line of work that focuses on efficient retrieval and ranking is based on dual-encoder architectures, such as ANCE~\citep{Xiong20}, DPR~\citep{Karpukhin20}, and INSTRUCTOR~\citep{INSTRUCTOR}. 
These methods use contrastive-style training to learn a query and item encoder and metrics such as cosine similarity to rank query-item pairs. 
They can be made highly efficient via a nearest neighbor search~\citep{Subramanya19,malkov2018efficient} over pre-computed embeddings, but lose out on accuracy compared to cross-encoder based approaches~\citep{nogueira2020passage,santhanam-etal-2022-colbertv2}.

\section{\underline{CROSS}-encoder with  \underline{J}oint \underline{E}fficient \underline{M}odeling}
\label{sec:method}
We now present \alg, our proposed approach to accurate and low-latency ranking via efficient scoring of multiple items. Before we describe \alg in detail, we develop the notation used in the subsequent sections of this manuscript. 

\noindent \textbf{Notation:} We denote queries by $\vq$ and the corresponding set of $N$ candidate items retrieved for $\vq$ by $\sK_{\vq}  = \{\vk_1, \vk_2, ..., \vk_{N}\}$. We denote the dataset of queries and items used for training by $\sQ_{tr}$ and $\sI_{tr}$ respectively, and that of the test datasets by $\sQ_{te}$ and $\sI_{te}$. We drop the subscripts when the meaning is clear from the context. The ground-truth scores are given by \(\vy_{i} \in \sR^N\), \([\vy_i]_j = y_{ij}\) denoting the score for the item \(\vk_j\) from set \(\sK_{\vq_i}\), associated with the query \(\vq_i\). The queries \(\vq\) and items \(\vk_j\) are tokenized via \(\gT(\cdot)\) to obtain \(d\)-dimensional representations (tokens), given by \(\sT_{\vq} = \left\{\rvq^1,\rvq^2,\ldots,\rvq^{L_{\vq}}\right\}\), and \(\sT_{\vk_{j}} = \left\{\rvk_j^1,\rvk_j^2,\ldots,\rvk_j^{L_{\vk_j}}\right\}\), respectively, where \(\rvq^{\ell},\rvk_j^{\ell} \in \sR^d\). 

\noindent \textbf{Research Problem:} We seek to learn a ranking model that, given a query $\vq$, assigns scores $\sS_{\vq} = \{s_1,s_2,..., s_{N}\}$ for all items in $\sK_{\vq}$, such that the ranking induced by the scores is accurate.

\subsection{The \alg Architecture}
\label{subsec:method_arch}
\alg comprises an encoder to obtain representations of a given query \(\vq\) and all its candidate items $\sK_{\vq}$. The representations are pooled and passed to a classification head, that outputs a score corresponding to each item \(\vk_i\in\sK_{\vq}\) associated with the query. \par
\alg{'s} architecture is primarily inspired by the observation that the candidate items in the set \(\sK_{\vq}\), for a given query \(\vq\), has significant token overlap amongst themselves. A more in-depth exploration of this phenomenon, in the context of efficient-scoring, is provided in Section~\ref{SubSubSec_Efficiency}. Given a query \(\vq\) and its candidate items \(\sK_{\vq}\), the core idea is we form the union of tokens \(\sT_{U_{\vq}}\) of all items in \(\sK_{\vq}\). Subsequently, the representations of query and set \(\sT_{U_{\vq}}\) can be obtained in a \textit{single pass} of the encoder, and scored via a \textit{single pass} over the classifiers. \par
While at first glance, it might appear that using \(\sT_{U_{\vq}}\) (with a potential loss of ordering of the item tokens) could adversely affect performance, we observed in preliminary experimentation that this is not the case when scoring short-text items. In particular, we compared the performance of two cross-encoder models on search engine logs, one with items as-is, and another comprising items with alphabetically sorted tokens. Both the mean average precision (MAP) and accuracy of the latter model was found to be within 1\% of the score obtained when the sequence information is retained. 
Additional discussions are provided in Appendix ~\ref{App:AdsExp}. 
\par
We now describe the \alg encoder and classifier in detail. \par

\noindent \textbf{Encoder}: \alg employs a trainable encoder $\cE_{\vtheta}$, which takes as input a sequence of tokens $\vT = \left[\rvt^1, \rvt^2, \ldots, \rvt^{n}\right]$ (of length \(n\)), and generates as output another sequence of $d$-dimensional {\it contextual embeddings} $\cE_{\vtheta}(\vT) = \vE = \left[\rve_1, \rve_2, \ldots, \rve_{n}\right]$. These embeddings provide context-dependent representations of the input tokens, and can be used for downstream tasks such as classification, generation, and retrieval. Given the tokenization of the query \((\sT_{\vq})\), and that of an item \((\sT_{\vk_j})\), the contextual embeddings of the tokens 
\[ 
[\rvt^{\left[\mathrm{CLS}\right]}, \rvq^1, \rvq^2,\ldots, \rvq^{L_{\vq}}, \rvt^{\left[ \mathrm{SEP} \right]}, \rvk_j^1, \rvk_j^2,\ldots, \rvk_j^{L_{\vk_j}}]
\] in baseline variants is given by 
\[
[\rve^{\left[\mathrm{CLS}\right]}, \rve^{\rvq^1}, \rve^{\rvq^2},\ldots, \rve^{\rvq^{L_{\vq}}}, \rve^{\left[\mathrm{SEP}\right]}, \rve^{\rvk_j^1},\rve^{\rvk_j^2},\ldots, \rve^{\rvk_j^{L_{\vk_j}}} ],
\]
where \(\rve^{\left[\mathrm{CLS}\right]}\) and \(\rve^{\left[\mathrm{SEP}\right]}\) denote the embeddings of the \(\rvt^{\left[\mathrm{CLS}\right]}\) and \(\rvt^{\left[ \mathrm{SEP} \right]}\) tokens, respectively. These contextual embeddings are pooled to obtain a single $d$-dimensional embedding for each pair \((\vq,\vk_j)\) by means of a sum or mean pooling layer, or by taking $\rve^{\left[\mathrm{CLS}\right]}$. This process is computationally expensive due to the need for \(N\) forward passes of the encoder to compute the scores for each item in $\sK_{\vq}$. \par
In \alg, we leverage the short-text nature of the items, item-item interactions, and redundancy of tokens amongst items in $\sK_{\vq}$. This is done by computing the contextual embeddings for all \textit{distinct tokens in the retrieved item set $\sK_{\vq}$} in one pass over the sequence of the query tokens \(\sT_{\vq}\) combined with \textit{item token union set} $\sT_{U_{\vq}} = \left\{\rvt^{\vu^1}, \rvt^{\vu^2},\ldots, \rvt^{\vu^{M}}\right\}$, where $M$ is the total number of tokens in the union set. The contextual embeddings of all input tokens in \alg are computed as 
\begin{align*}
    \mathbf{E}&=\cE_{\vtheta}\left(\left[\rvq^1\!\!, \ldots, \rvq^{L_{\vq}}\!, \rvt^{\left[ \mathrm{SEP} \right]}, \rvt^{\vu^1}\!\!\!\!, \ldots, \rvt^{\vu^{M}}\right]\right) \\
    &= \Big[\rve^{\left[\mathrm{CLS}\right]}, \rve^{\rvq^1}\!\!, \ldots, \rve^{\rvq^{L_{\vq}}}\!, \rve^{\left[\mathrm{SEP}\right]}\!, \rve^{\rvt^{\vu^1}}\!\!,\ldots, \rve^{\rvt^{\vu^{M}}} \Big].
\end{align*}
Since the number of tokens in the item union set is significantly smaller than the sum of tokens of all items in $\sK_{\vq}$ (cf. Section \ref{SubSubSec_Efficiency}), the proposed token-union-based inference enables highly efficient computation of contextual embeddings. 
Figure~\ref{fig:inference} (a) illustrates the difference between the \alg encoder, and standard encoders such as monoBERT. 

 %====== FIGURE ======%
%====== Comparison of Architectures ======%
\begin{figure*}[t!]
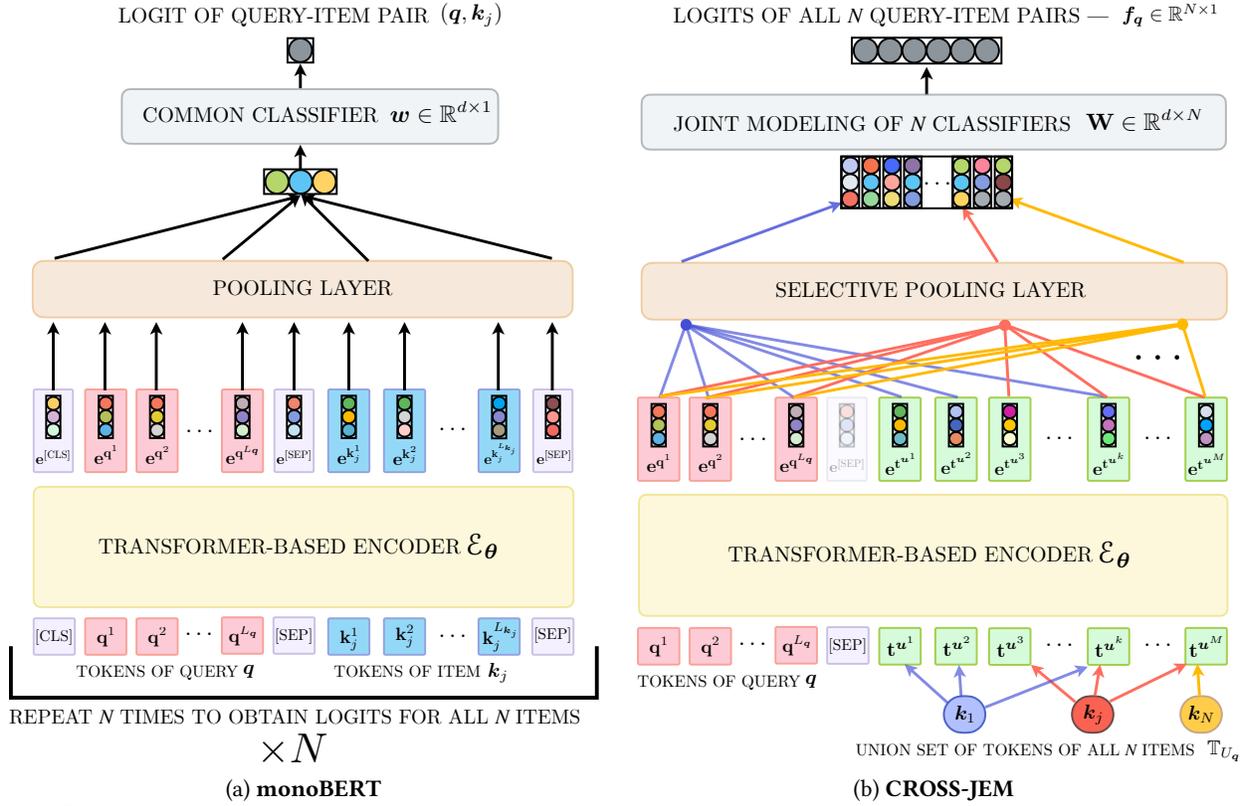

  \begin{center}
    \begin{tabular}[c]{P{.45\linewidth}P{.46\linewidth}}
      \includegraphics[width=1.0\linewidth]{figures/SID_MonoBERT_WSDM.pdf} & 
      \includegraphics[width=1.0\linewidth]{figures/SID_CrossJEM_WSDM.pdf} \\
      (a) \textbf{\sce} & (b) \textbf{\alg}\\[-5pt]
    \end{tabular} 
  \caption[]{ (\includegraphics[height=0.013\textheight]{figures/Rgb.png} Best viewed in color)~ Computing the relevance scores for query $\vq$ and retrieved set of $N$ items $\{\vk_1,\vk_2,\ldots,\vk_{N}\}$ using (a) \sce and (b) \alg: (a) \sce (standard cross-encoders) infers the scores for each $(\vq, \vk_j)$ pair with query and item tokens individually (pointwise). The inference is repeated $N$ times for $N$ items. (b) \alg jointly inputs query tokens and the set of union of tokens from all items to be ranked (listwise). The logits for an item $\vk_j$ can be computed by selecting the contextual embeddings corresponding to its tokens. The selected embeddings are pooled, and input to the linear classifier jointly (\textit{i.e.,} \(\mathbf{W} = [\vw, \vw, \ldots, \vw] \in \mathbb{R}^{d\times N}\)), thus obtaining the $N$ scores in a single encoder and classifier pass.
   }
  \label{fig:inference}  
  \end{center}
  \end{figure*}

\noindent \textbf{Selective Pooling Layer}: Given the contextual embeddings $\mathbf{E}$ for all query tokens $(\sT_{\vq})$ and union over keyword tokens $(\sT_{U_{\vq}})$, \alg employs a selective pooling layer to \textit{jointly model} the per-item relevance score. 
Given \( \vk_j \in \sK_{\vq}\), a pooled representation for pair $(\vq, \vk_j)$ is computed as the mean of the contextual embeddings for all tokens in $\sT_{\vq}$ and those tokens in $\sT_{U_{\vq}}$ which are present in $\sT_{\vk_j}$. The set of selected tokens for the pooling layer is given by
\(
    \sP_{\vq\vk_j} = \sT_{\vq} \bigcup \left\{ \rvt^{\left[SEP\right]}\right\} \bigcup \left\{\sT_{U_{\vq}} \bigcap \sT_{\vk_j}\right\}. 
\)
The selectively pooled representation $\ve_{qk_j} \in \mathbb{R}^d$ is obtained via selective mean pooling:
\begin{equation}
    \ve^{\vq\vk_j} = \frac{1}{|\sP_{\vq\vk_j}|} \sum_{\rvt^j \in \sP_{\vq\vk_j}} \ve^{\rvt^j}.
\end{equation}
Ablations on designing the pooling layer are provided in Section~\ref{sec:exps}.

\noindent \textbf{Classifier}: The final stage in \alg is a $d$-dimensional shared linear classifier $\vw \in \mathbb{R}^d$ which computes the relevance score associated with each pair \((\vq,\vk_j)\). The selectively pooled representations $\ve^{\vq\vk_j}$ obtained for all $\vk_j \in \sK_{\vq}$ are batched together ($\ve^{\vq\vk} \in \mathbb{R}^{N \times d}$) allowing for the computation of all logits \(
    [\vf_{\vq}]_j =  \langle \vw, \ve^{\vq\vk_j} \rangle
\) in a single shot. The scores are defined over these logits (cf. Section~\ref{subsec:rpl}).

\section{The \alg Algorithm} \label{sec:implement}

For \textbf{training} \alg, we jointly learn the encoder model and classifier parameters $\{\vtheta, \vw\}$ with target scores obtained from a teacher model. The teacher is a large cross-encoder-based model which is highly accurate but computationally expensive (cf. Section~\ref{sec:exps}). The encoder followed by the selective pooling layer output representations associated with the query and each candidate item, while the linear classifier generates logits for each query-item pair. During \textbf{inference}, \alg predicts the logits for a set of retrieved shortlist items jointly (for a given query), as in the training phase. The \(\mathrm{SoftMax}\) over the logits give the scores, and in turn, the ranking. The inference algorithm is provided in Appendix~\ref{sup:algo}.
\par

The \rpllong (\rplshort) used to train \alg is a novel variant of existing list-based loss functions, and is designed to take advantage of CROSS-JEM architecture, wherein all item scores are available jointly from a single forward pass.

\subsection{The \alg Ranking Probability Loss} \label{subsec:rpl}
The standard loss to train cross-encoders is the binary cross-entropy loss, defined over \((\vq_i,\vk_j)\), given by \(\cL^{BCE}(\vtheta, \vw)\) is
\begin{align}
     -\sum_{i=1}^{|\sQ_{tr}|} \sum_{j=1}^{N} &\quad \Big[ y_{ij} \log \left(\vf_{\vw,\vq_{i},j}\right) + \left( 1 - y_{ij}\right) \log\left( 1 - \vf_{\vw,\vq_{i},j}\right) \Big],
     \label{equation:loss_func}
\end{align}
where \(\vf_{\vw,\vq_{i},j} = \left\langle\vw, \cE_{\vtheta}\left(\vq_i, \vk_j \right) \right\rangle\) is the score of item \(j\), associated with query \(i\), computed by means of an inner product with the classifier \(\vw\).
However, such cross-entropy-based pointwise losses fail to account for the list of items available for ranking. \textit{List-based loss functions}~\citep{cao2007learning}, in contrast, leverage the task-specific ranking information, help learn a scoring function for a list of items to be ranked, rather than for individual query-item pairs. As an example, consider the ListNet~\citep{cao2007learning} loss \(\cL^{LN}(\vtheta, \vw)\), given by:
\begin{align}
    \cL^{LN}(\vtheta, \vw) = - \sum_{i=1}^{|\sQ_{tr}|}\sum_{j=1}^{N} P_{\vy,j} \log \left(P_{\vf,j}\right), \label{eqn:LNloss}
\end{align}
where \(\displaystyle P_{\vx,j} = \frac{\Phi\left([\vx]_j\right)}{\sum_{\ell=1}^{N}\Phi\left([\vx]_{\ell}\right)}\), and \(\vx\) is set either to the targets \(\left[\vy_{\vq_{i}}\right]\), or the output logits \(\left[\vf_{\vq_{i}}\right]\), and \(\Phi\) is a normalizing function, typically the exponential operation, leading to \(P\) being a \(\mathrm{SoftMax}\) function. However, this formulation is still centered around obtaining the pointwise logits, and subsequently computing the top-one probability \(P_{\vx,j}\) using a normalization term that accounts for all pairs. \par
In \alg, we design a novel version of the ListNet loss, one that factors in the availability of all logits \(\left[\vf_{\vq_{i}}\right]\) computed by taking into account the item-item interactions. Given $\vf_{\vq_i}$, the ground-truth scores \(\vy_i\), and a candidate item \(\vk_j\), we define the set $\sL_j = \{k \in \{1, N\}: [\vy_i]_k < [\vy_i]_j \}$, \textit{i.e.,} \(\sL_j\) comprises the indices \(k\) for which the ground truth score at location \(k\) is \textbf{lower} than \([\vy_i]_j\), the ground truth score at location \(j\). We now define the \rplshort as: 
\begin{equation}
    \label{eqn:rlp}
    \cL^{\rplshort} = -\sum_{i=1}^{|\sQ_{tr}|}\sum_{j=1}^{N}\Big(\sum_{k \in \sL_{j}}[\vy_i]_{k} \Big)\log\Big(\mathrm{SoftMax}\Big(\sum_{k \in \sL_{j}} [\vf_{\vq_i}]_{k} \Big)\Big).
\end{equation}
The loss \(\cL^{\rplshort}\) represents a cross-entropy loss over functions of the target \(\vy_i\) and scores computed as a function of the logits \(\vf_{\vq_j}\). The following Lemma sheds light on the relationship between \rplshort and the ranking probability distribution.
\begin{lemma}
    \label{lemma:rpl}
    (\textbf{\rpllong}) %Let \((\vq_i,\vk_j)\) query-item pairs with the candidate items drawn as \(\vk_j \in \sK_{\vq_i}\). 
    Assume without loss of generality that \(\vf_{\vq_i} \in [0,1]^N\). Let \(\vP \in \sR^{N\times N}\) denote a matrix with entries \(p_{jk}\) given by \(p_{jk}~=~\mathrm{Prob}\left(\text{ranking item \(\vk_j\) at location $k$}\right) \triangleq C \sum_{\ell \in \sL_{k}} [\vf_{\vq_i}]_{\ell}\), where \(C\) is a normalizing constant. Then, the \rpllong~maximizes the probability of ranking items \(\sK_{\vq_i}\) in the ordering of the ground-truth \(\vy_i\).
\end{lemma}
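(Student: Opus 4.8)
The plan is to read $\cL^{\rplshort}$ (restricted to a single query $\vq_i$) as a \emph{weighted cross-entropy} between a target distribution built from $\vy_i$ and a predicted distribution built from $\vf_{\vq_i}$, and then combine the standard variational characterization of cross-entropy with the monotone structure of the sets $\sL_j$. Concretely, fix $i$ and set $\alpha_j=\sum_{k\in\sL_j}[\vy_i]_k\ge 0$ and $\beta_j=\mathrm{SoftMax}_j\big(\sum_{k\in\sL_j}[\vf_{\vq_i}]_k\big)$; then $\sum_j\beta_j=1$ and the inner sum over $j$ in \eqref{eqn:rlp} is $-\sum_j\alpha_j\log\beta_j$. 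Writing $W=\sum_j\alpha_j$ and $\bar\alpha_j=\alpha_j/W$, this equals $W\big(H(\bar\alpha)+\KL(\bar\alpha\,\|\,\beta)\big)$. Since $W$ and $H(\bar\alpha)$ depend only on $\vy_i$, minimizing $\cL^{\rplshort}$ over $\{\vtheta,\vw\}$ reduces, per query, to minimizing $\KL(\bar\alpha\,\|\,\beta)$, which by Gibbs' inequality is nonnegative and vanishes exactly when the predicted distribution $\beta$ equals the target $\bar\alpha$.

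Next I would identify the target $\bar\alpha$ with the ground-truth ranking. Because $\sL_j=\{k:[\vy_i]_k<[\vy_i]_j\}$, the assignment $j\mapsto\sL_j$ is monotone for the weak order induced by $\vy_i$: if $\vk_a$ outranks $\vk_b$ then $\sL_b\subseteq\sL_a$, and (using the normalization $\vf_{\vq_i}\in[0,1]^N$, and taking $\vy_i$ nonnegative likewise) $\alpha_b\le\alpha_a$, strictly when the ground-truth scores differ. Hence $\bar\alpha$ is order-isomorphic to the ground-truth ranking; moreover the same nesting makes the cumulative logits $\sum_{k\in\sL_j}[\vf_{\vq_i}]_k$ --- and therefore $\beta$ --- ordered the same way once $\vf_{\vq_i}\ge 0$, so driving $\beta\to\bar\alpha$ aligns not merely the induced order but the ranking probabilities themselves.

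To cash this out in terms of $\vP$, I would exhibit $\mathrm{Prob}(\text{ground-truth order})$ as a Plackett--Luce--type product of the per-position choice probabilities $p_{jk}=C\sum_{\ell\in\sL_k}[\vf_{\vq_i}]_\ell$: ordering the positions $k$ by decreasing $[\vy_i]_k$ and multiplying the corresponding entries yields a likelihood of the ground-truth permutation, and $-\log$ of it reproduces, up to additive constants independent of $\{\vtheta,\vw\}$, the weighted sum in \eqref{eqn:rlp} --- using $\log\mathrm{SoftMax}_j(\cdot)=\sum_{k\in\sL_j}[\vf_{\vq_i}]_k-\log Z$ and the nesting $\sL_1\subseteq\sL_2\subseteq\cdots$ along the sorted order to telescope the product of partial sums into that weighted sum. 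Combined with the first two paragraphs, the minimizer of $\cL^{\rplshort}$ is then the model whose $\vP$ concentrates maximal mass on the ground-truth ordering, which is the claim.

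The main obstacle I anticipate is exactly this last identification: the loss normalizes the cumulative logits with a softmax while $\vP$ uses the linear constant $C$, so the telescoping must be carried out carefully --- likely through a $\log$-concavity (Jensen) step, or by absorbing $C$ and the softmax partition $Z$ into the parameter-independent constants --- and ties in $\vy_i$ must be handled: the strict inequality in the definition of $\sL_j$ makes $\bar\alpha$ a statistic of the \emph{weak} order, so ``the ordering of the ground-truth'' should be read up to permutations within tie classes and ``maximizes the probability'' as maximal total mass over that class. A secondary point to settle is realizability --- that the cumulative-logit map $\vf\mapsto\big(\sum_{k\in\sL_j}[\vf]_k\big)_j$ can attain the target $\bar\alpha$ within $[0,1]^N$, so that $\inf\KL(\bar\alpha\,\|\,\beta)=0$ is achieved and the argument is non-vacuous.
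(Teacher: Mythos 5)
Your argument is correct and takes a genuinely different --- and in places tighter --- route than the paper's. The paper's proof interprets $p_{jk}$ probabilistically as $\mathrm{Prob}\left(f_j > \max_{k}\{f_k\}\right)$, replaces the max by the sum $\sum_{\ell\in\sL_k}f_\ell$ via a norm bound, argues through a conditional-probability computation that the off-diagonal entries of $\vP$ collapse onto the diagonal, and finally reads $\cL^{\rplshort}$ as a cross-entropy between $\mathrm{Diag}(\vP)$ and $\mathrm{Diag}(\vP^*)$, i.e.\ a KL divergence between the predicted and target ranking distributions. Your first two paragraphs reach the same endpoint by a cleaner algebraic path: the decomposition $-\sum_j\alpha_j\log\beta_j = W\left(H(\bar\alpha)+\KL(\bar\alpha\,\|\,\beta)\right)$ isolates the parameter-dependent part as a bona fide KL term with no probabilistic approximation, and the nesting $\sL_b\subseteq\sL_a$ whenever $\vk_a$ outranks $\vk_b$ supplies the order-isomorphism between the target $\bar\alpha$ and the ground-truth ranking that the paper only asserts implicitly. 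What the paper's route buys is an explicit (if heuristic) construction of the matrix $\vP$ named in the statement; what yours buys is rigour in the reduction to a KL divergence and a precise statement of when it vanishes. Your third paragraph's Plackett--Luce telescoping is the one step you leave open, but the obstacles you flag there --- the softmax normalization in the loss versus the linear constant $C$ in $\vP$, and ties making the target a statistic of the weak order only --- are genuine and are present, unacknowledged, in the paper's own proof, which silently identifies the linearly normalized diagonal of $\vP$ with the softmax-normalized scores appearing in $\cL^{\rplshort}$. On the whole your proposal is at least as complete as the published argument.
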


 The detailed proof is given in Appendix~\ref{app:proofs}. The sketch of the proof follows by evaluating the matrix \(\vP\) for the predicted logits \(\vf_{\vq_i}\) defined above, and \(\vP^*\) defined over the ground-truth scores \(y_{ij}\). Minimizing the distance between \(\vP\) and \(\vP^*\) is equivalent to minimizing the KL divergence between the predicted, and ground truth ranking distributions, which yields the \rpllong defined in \Eqref{eqn:rlp}. \par

 The following Corollary presents an equivalence between the ListNet loss~\citep{cao2007learning} and the proposed \rpllong. 
\begin{corollary}
     \label{corr:rpl}
    (\textbf{\rplshort and the ListNet loss}) %Let \(\vy_i\) and \(\vf_{\vq_i}\) denote the ground-truth scores and the predicted logits, respectively. 
    Minimizing the \rpllong~is equivalent to optimizing for the ListNet top-1 probability loss (\Eqref{eqn:LNloss}~\citep{cao2007learning}) defined over modified scores  \(\sum_{k \in \sL_{j}} [\vf_{\vq_i}]_j\) and modified ground-truth scores \(\tilde{y}_{ij} = \sum_{k \in \sL_{j}} [\vy_{i}]_j\).
\end{corollary}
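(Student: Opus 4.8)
The plan is to unfold both losses into a common form and read off the claimed correspondence by direct substitution. First I would fix a query $\vq_i$ and introduce shorthand for the \emph{modified} quantities appearing in the statement: for each item index $j$ set $\tilde f_{ij} := \sum_{k \in \sL_j}[\vf_{\vq_i}]_k$ and $\tilde y_{ij} := \sum_{k \in \sL_j}[\vy_i]_k$, collecting these into vectors $\tilde{\vf}_i, \tilde{\vy}_i \in \sR^N$ (recall $\sL_j$ itself depends on $\vy_i$). With this notation the \rplshort defined in \Eqref{eqn:rlp} is exactly $\cL^{\rplshort} = -\sum_i \sum_j \tilde y_{ij}\,\log\big(\mathrm{SoftMax}(\tilde{\vf}_i)_j\big)$, since $\mathrm{SoftMax}\big(\sum_{k\in\sL_j}[\vf_{\vq_i}]_k\big)$ is precisely the $j$-th coordinate of the softmax applied to the vector $\tilde{\vf}_i$.

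Next I would evaluate the ListNet top-1 loss of \Eqref{eqn:LNloss} at the modified inputs. On the prediction side, taking the ListNet normalizer $\Phi = \exp$ gives $P_{\tilde{\vf}_i,j} = \exp(\tilde f_{ij})\big/\sum_{\ell}\exp(\tilde f_{i\ell}) = \mathrm{SoftMax}(\tilde{\vf}_i)_j$, which matches the logarithm's argument in $\cL^{\rplshort}$. On the target side, reading $\Phi$ as the identity --- a valid choice in the ListNet family, which only asks that $\Phi$ be increasing and positive on the relevant range, and the $\tilde y_{ij}$ are nonnegative --- gives $P_{\tilde{\vy}_i,j} = \tilde y_{ij}/Z_i$ with $Z_i := \sum_{\ell}\tilde y_{i\ell}$. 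Substituting both yields $\cL^{LN}(\tilde{\vy},\tilde{\vf}) = -\sum_i (1/Z_i)\sum_j \tilde y_{ij}\,\log\big(\mathrm{SoftMax}(\tilde{\vf}_i)_j\big)$.

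Comparing the two expressions, they agree term-by-term for each query up to the factor $Z_i$, which depends only on the ground-truth labels and not on the trainable parameters $\{\vtheta,\vw\}$. Hence $\cL^{\rplshort} = \sum_i Z_i\cdot\big(\text{per-query ListNet loss on }(\tilde{\vy}_i,\tilde{\vf}_i)\big)$, so the two objectives differ only by parameter-free per-query reweighting; in particular, when the ground-truth vectors are normalized per query so that $Z_i$ is constant in $i$ (equivalently, the $\tilde y_{ij}$ are themselves the top-one probabilities $P_{\tilde{\vy}_i,j}$), the two losses coincide up to a single positive constant and therefore have the same minimizers. I would close by checking consistency with Lemma~\ref{lemma:rpl}: the normalizing constant $C$ there plays the role of the softmax partition function on the prediction side and of $1/Z_i$ on the target side, so phrasing everything in terms of the normalized ranking matrices $\vP$ and $\vP^*$ makes the identification exact.

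The main obstacle is exactly this normalization bookkeeping: \rplshort carries \emph{unnormalized} target weights $\tilde y_{ij}$, whereas ListNet's top-1 loss is written through a \emph{normalized} top-one probability, and the prediction side is read with $\Phi=\exp$ while the target side must be read with $\Phi=\mathrm{id}$. Making the ``equivalent to optimizing'' claim rigorous thus amounts to arguing (i) that these choices of $\Phi$ lie within the ListNet framework and (ii) that the residual per-query constants $Z_i$ are parameter-independent and so do not change the optimization problem (or are genuinely constant under the mild per-query normalization of the labels). Everything else is a mechanical substitution.
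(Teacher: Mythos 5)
Your proof is correct and follows essentially the same route as the paper's: substitute the modified scores $\tilde f_{ij}=\sum_{k\in\sL_j}[\vf_{\vq_i}]_k$ and targets $\tilde y_{ij}=\sum_{k\in\sL_j}[\vy_i]_k$ into the ListNet top-1 form and choose $\Phi$ appropriately on each side. You are in fact more careful than the paper, which compresses the entire normalization issue into the phrase ``a $\Phi$ that results in mapping $\tilde{\vs}$ and $\tilde{\vy}$ to valid probability distributions''; your explicit identification of $\Phi=\exp$ on the prediction side, $\Phi=\mathrm{id}$ on the target side, and the parameter-free per-query factor $Z_i$ is exactly the bookkeeping needed to make the claimed ``equivalence'' precise.
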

Intuitively, defining the modified scores in terms of the sum of all logits in \(\sL_j\), the set of indices of ground-truth scores lower than the logit at \(j\), ensures that the loss takes into account the interactions between the contextual embeddings contribution to the different logits. This is unique to the \alg setting, and Corollary~\ref{corr:rpl} shows that all guarantees derived for ListNet loss also hold for \rplshort. \par

We show in Section~\ref{sec:exps} that \alg trained with \rplshort yields significantly better ranking accuracy than existing pointwise and listwise loss functions, and leads to state-of-the-art performance on public and proprietary ranking benchmarks.

\section{Experimental Validation}
\label{sec:exps}

\setlength{\tabcolsep}{4pt}
\begin{table*}[!t]
    \caption{Performance of \alg and the baseline methods on the SODQ and MS MARCO-Titles ranking datasets: All baselines and our method, \alg, use $\sim$60-100M parameter base models and are fine-tuned on the corresponding datasets, except for the large pre-trained models (indicated with asterisk (*)), which are used as is without any further fine-tuning on the two datasets. \alg surpasses similar-sized state-of-the-art methods fine-tuned for short-text ranking as well as large pre-trained models by at least 3\%.} % \textsuperscript{\textdagger}Denotes averaging over 15 runs}
    \label{table_acad_main}
      \centering
      % \scriptsize
    \fontsize{9}{11}\selectfont 
        \begin{tabular}{@{}c|ccccccc}
        \toprule
 \multicolumn{3}{c}{\textbf{Method}} & \textbf{Parameters} & \multicolumn{2}{c}{\textbf{SODQ}} & \multicolumn{2}{c}{\textbf{MS MARCO-Titles}} \\
\midrule 
\multicolumn{3}{c}{} & {} & MAP@5 & MAP@10 & MRR@5 & MRR@10\\
\midrule
\textbf{Sparse Models} &  & BM25 & {--} & 32.80 & 39.26 & 23.71 & 24.57 \\
\midrule
\multirow{5}{*}{\makecell{\textbf{Encoder}}} & Early-interaction & monoBERT & {66M} & 46.79 & 48.04 & 30.89 & 32.47 \\
\cline{2-8}
& Late-interaction & ColBERT  & {109M} & {36.10} & {37.68} & {30.25} & 32.00 \\
\cline{2-8}
& \multirow{3}{*}{\makecell{Dual Encoders}} & DPR & {66M} & {47.32} & {48.48} & 28.78 & 30.87 \\
& & ANCE & {66M} & {48.31} & {49.41} & 28.48 & 30.53 \\
& & INSTRUCTOR* & 335M & 49.47 & 50.81 & 28.84 & 30.55 \\
\midrule
\multirow{2}{*}{\textbf{Encoder-Decoder}} & \multirow{2}{*}{Seq2Seq} & {RankT5-6L} & {74M} & {49.50} & {50.75} & {30.73} & {32.52}\\
& & {RankT5-base*} & {223M} & {45.66} & {49.47} & {27.87} & {29.75}\\
\midrule
\textbf{Decoder} & Ranking LLMs & {\makecell{RankingGPT-\\Llama2-7B*}} & {7B} & {47.64} & {50.62} & {28.66} & {30.47}\\
\midrule
\textbf{Ours} & \textbf{Joint Ranking} & \textbf{\alg-6L} & \textbf{66M} & \textbf{52.40} & \textbf{53.05} & \textbf{33.82} & \textbf{35.45} \\
        \bottomrule
    \end{tabular}
\end{table*}

\textbf{Datasets}:
%We use two publicly available and widely used benchmarks for evaluation: 
We evaluate \alg on the publicly available on \textbf{Stack Overflow Duplicate Questions}~\citep{LinkSo} \textbf{(SODQ)} and a short-text version of
\textbf{MS MARCO}~\citep{dai2020context} datasets. 
While the SODQ dataset is used as is, for MS MARCO, a `short-text' variant of the query-webpage click dataset is constructed by exclusively considering webpage titles (subsequently referred to as \textbf{MS MARCO-Titles}).
This narrows the dataset's length distribution, aligning it more closely with the typical item lengths seen in sponsored search. 
We note that the standard metrics for MS MARCO passage ranking do not apply to MS MARCO-Titles, as they rely on the passage content as well as the title for ranking. 
Therefore, we report updated numbers for MS MARCO-Titles in Table~\ref{table_acad_main}, as appropriate. 
See Appendix~\ref{sup:dataset} for more details on the datasets.

\noindent \textbf{Evaluation Metrics}: To validate the efficacy of \alg for ranking tasks, we consider Mean Average Precision (MAP) and Mean Reciprocal Rank (MRR) as the evaluation metrics on both MS MARCO and SODQ datasets (see Appendix~\ref{sup:metrics} for metric definitions). Note that MAP is a more comprehensive version of MRR, which is specifically designed for scenarios where the test set contains only one positive item per test query. 
Hence, on MS MARCO, where there is only one relevant item per query, MAP@\(K\) and MRR@\(K\) are equivalent for any \(K\).

\noindent \textbf{Baselines}: We compare \alg{} with transformer based ranking approaches as well as sparse BoW methods such as \textbf{BM25}. 
Within transformer based ranking baselines, we consider state-of-the-art encoder based methods such as \textbf{monoBERT}~\citep{nogueira2020passage}.
These approaches employ \textit{early interaction} across query-item tokens (pointwise) in the form of self-attention and are widely used for ranking. We also compare \alg against more efficient encoders such as \textbf{ColBERT}~\citep{khattab2020colbert}.
\textit{Late-interaction} Models like ColBERT apply self-attention within query or item separately and combine the contextual embeddings across query and item later via a computationally cheap cross-interaction layer. We compare \alg with a ColBERT model consisting of 12 layers (109M parameters). Dual Encoders such as \textbf{ANCE}~\citep{Xiong20}, \textbf{DPR}~\citep{Karpukhin20}, and \textbf{INSTRUCTOR}~\citep{INSTRUCTOR} use contrastive learning to train the encoder and obtain dense representations of queries and items. Typically, a dot product between the query and item representations is used as a measure of relevance. INSTRUCTOR is a dual encoder approach, which obtains task-specific embeddings using instruction prompts to the model. Following ~\citet{INSTRUCTOR}, we use pre-trained and instruction-tuned INSTRUCTOR model for zero-shot evaluation on MS MARCO and SODQ.
We also compare against state-of-the-art sequence-to-sequence class of models containing encoder-decoder architecture, \textbf{RankT5}~\citep{rankt5}.
RankT5-base is a 24 layer model trained on MS MARCO passage ranking task and is taken as it is for evaluating as a baseline here.
For a fair comparison with 6 layer \alg model, we also fine-tune our own RankT5 model with 6 layers (encoder+decoder) on both MS MARCO-Titles and SODQ datasets.
For completeness, we also compare against a decoder only ranking model with Llama2-7B architecture, referred to as \textbf{RankingGPT-Llama2-7B}, available from ~\citet{zhang-etal-2024-two}. These models have been pre-trained as well as fine-tuned specifically for ranking tasks with huge amount of training data generated from larger and more accurate LLMs.\par
\noindent \textbf{Hyper-parameters}: \alg's tunable hyper-parameters include maximum item union length, $L_u$ and number of items per query, $N$. These hyper-parameter values are dictated by the application requirements and efficiency constraints. We use $N = 10$ and $L_u = 360$ on MS MARCO and $N = 30$ and $L_u = 242$ on SODQ. Hyperparameters used for the baselines are given in Appendix~\ref{sup:hyperparams}.

\subsection{Accuracy Results on Public Benchmarks}

Table~\ref{table_acad_main} shows the results of \alg and several competitive baselines on the SODQ and MS MARCO datasets. We observe that \alg achieves superior performance over the encoder based methods on both datasets, demonstrating the effectiveness of its listwise ranking approach. We also compare \alg with encoder-decoder based methods such as RankT5-6L, which encode queries and generate item rankings using a decoder.
Here, we note that RankT5-6L, which is fine-tuned with a listwise ranking loss, outperforms similar sized encoder based methods (monoBERT and ColBERT), highlighting the benefits of listwise modeling for ranking. 
However, we find that RankT5-base, a larger encoder-decoder model (24 layers, 223M parameters) that is fine-tuned on a long-text passage ranking task ~\citep{rankt5}, performs worse than smaller models that are specifically fine-tuned on short-text ranking. A similar observation is made for RankingGPT-llama2-7b, a large decoder only model (7B parameters) that is pre-trained and fine-tuned on large-scale ranking datasets ~\citep{zhang-etal-2024-two}, but performs similarly to much smaller models on short-text ranking. 
These results suggest that the Seq2Seq models are sensitive to the domain and length of the ranking tasks, and require careful fine-tuning and adaptation. To verify this, we fine-tune a RankT5-base model on the short-text ranking datasets and observe a significant improvement in performance (cf. Appendix~\ref{sup:hyperparams}). 
We also report that \alg, which uses a 6-layer BERT as the base encoder, has the same number of parameters as \sce, but is much faster and more accurate. Specifically, \alg can perform joint (listwise) inference for ranking over 4$\times$ faster than \sce, which requires multiple pointwise computations (cf. Section~\ref{SubSubSec_Efficiency}). Moreover, \alg performs up to 5\% more accurately than dual encoder methods and up to 20\% more accurately than sparse models like BM25. We perform more experiments on the comparison of pointwise and listwise loss functions in \alg in Section~\ref{sec:ablations}.

\subsection{Case Study on Sponsored Search Ads}
\label{case_study_ads}

In large-scale search and recommendation systems like sponsored search, the ranking model serves to weeding out bad retrievals and rank the prediction pool of different retrievers to select the top-k. We evaluate the effectiveness of \alg on this real-world task of matching user queries to relevant advertiser-bid keywords. 
A large scale dataset consisting of 1.8B query-keyword pairs was created by mining search engine logs (detailed in Appendix~\ref{sup:dataset}).

\noindent \textbf{Accuracy comparison}: 
As shown in Table~\ref{table_ads_main}\ (details on baselines in Appendix~\ref{App:AdsExp}), \alg{} improves over the in-production sparse neural model MEB~\citep{MicrosoftResearch2023} in MAP by over 13\%. Further, \alg also outperforms ANCE and TwinBERT by large margins in MAP, Precision, and Recall. We also assess \alg's ability to eliminate irrelevant items while preserving relevant ones. Table~\ref{table_ads_main} presents the negative and overall accuracy when retaining top 80\% of positive items per query.
\alg{} achieves 99.45\% negative accuracy, removing nearly all irrelevant items.\par

\noindent \textbf{Efficiency Gains}: 
We observe that \alg takes only 9.8 ms to score 700 keywords for a query on a A100 GPU. In contrast, \sce takes 41.3 ms for the same task, rendering it unsuitable for online deployment. This represents a \textit{more than 4-fold reduction in latency} compared to standard cross-encoder models. The latency gains are because \alg scores multiple items for a query in one shot by passing their concatenated tokens through the model. On the other hand, \sce scores each query-item pair independently necessitating 700 passes compared to \alg{'s} 7 passes. Additionally, CROSS-JEM provides 3$\times$ lower latency on GPUs than MEB on CPUs. This highlights CROSS-JEM's ability to leverage GPU acceleration for efficiency, unlike sparse models.

\alg{} achieves a high throughput of 17,200 query-keyword pairs per second. 
This is over 5$\times$ more than the 3,350 pairs per second for \sce. The massive throughput and latency gains show \alg{'s} ability to meet the computational demands of large-scale industrial systems without sacrificing accuracy.\par

\noindent \textbf{Online A/B testing}:
\alg{} was deployed in the ranking stage of a premier search engine to conduct A/B tests on live traffic. The ranking stage receives an average of 700 keywords and up to 1400 keywords in the 99th percentile, from a suite of retrieval algorithms. The control group consisted of a proprietary combination of late interaction, dense retrieval, and sparse-neural-network algorithms. \alg{} demonstrated a decrease in the quick-back-rate (users who close the ad quickly, indicating non-relevance) by over 1.8\%. Furthermore, as judged by expert judges, \alg{} improved the proportion of accurate predictions by 10.2\%.  

\setlength{\tabcolsep}{4pt}
\begin{table*}[!t]
    \caption{Comparison of \alg with production baselines on Sponsored Search Ads Dataset for ranking advertiser-bid keywords for a user query. \alg outperforms baseline methods (with a latency small enough to be deployed for real-time ranking) by 13\% in MAP. \alg filters $>$90\% irrelevant predictions at a threshold which retains 80\% of good predictions.}
    \label{table_ads_main}
      \centering
      % \scriptsize
    \fontsize{9}{12}\selectfont
      %This will fit without resizebox.
\begin{tabular}{@{}lccccccc}
\toprule
\textbf{Method} & \textbf{MAP@100} & \textbf{P@50} & \textbf{R@50}  & \textbf{AUC} & \textbf{Negative Accuracy} & \textbf{Overall Accuracy} \\
\midrule 
ANCE & 78.39 & 41.94 & 94.02 & 89.84 & 86.19 & 85.55 \\
TwinBERT & 83.56 & 43.50 & 95.36 & 92.10 & 90.60 & 88.58  \\
MEB & 84.38 & 42.94 & 94.65 & 91.77 & 84.59 & 84.40 \\
\textbf{\alg} & \textbf{97.48} & \textbf{45.76} & \textbf{99.07} & \textbf{99.41} & \textbf{99.45} & \textbf{95.27} \\
\bottomrule
    \end{tabular}
    \vskip1em
\end{table*}

\setlength{\tabcolsep}{4pt}
\begin{table}[!t]
    \caption{(a) Ablation on loss function in \alg; (b) Adding item token sequence information in \alg via positional encodings in the pooling layer.}
    \label{table_ablations}
      \centering
      % \scriptsize
    \fontsize{9}{12}\selectfont
      %This will fit without resizebox.
      \begin{tabular}{@{}c|cccc}
        \toprule
      \textbf{Method} & BCE & CE & ListNet & \rplshort \\
      \midrule 
      \textbf{MRR @10} & 31.46 & 32.03 & 30.27 & 35.45 \\
              \bottomrule 
              \multicolumn{5}{c}{(a)}
    \end{tabular}
    %~~\vline~~
    \vskip1em
    % \hspace{1.75cm}
    \begin{tabular}{@{}c|cc}
      \toprule
      \textbf{Method} &  \multirow{2}{*}{\makecell{Without Positional\\ Encodings}} & \multirow{2}{*}{\makecell{With Positional\\ Encodings}}\\
      & & \\
      % & {Encodings} & {Encodings}
      \midrule 
      \textbf{MRR@10} & 35.45 & 35.68\\
      \textbf{MRR@5} & 33.82 & 33.98\\
      \bottomrule
      \multicolumn{3}{c}{(b)}
  \end{tabular}
\end{table}

\setlength{\tabcolsep}{4pt}
\begin{table}[!t]
    \caption{A comparison of latency between \alg and various baselines. The mean latency for scoring 700 items per query was computed on A100 GPUs for all models. We observe that \alg takes about $4\times$ lower inference time than standard cross-encoders (monoBERT).}
    \label{table_latency}
      \centering
      % \scriptsize
    \fontsize{8}{12}\selectfont
      \begin{tabular}{@{}c|ccccc}
      \toprule
      \textbf{Method} & ANCE & ColBERT & monoBERT & RankT5-6L & \textbf{\alg} \\
      \midrule
      \(\substack{\text{\textbf{Latency}} \\ \text{\textbf{(ms)}}\downarrow}\) & 4.0 & 4.5 & 41.3 & 41.3 & 9.8 \\
      \bottomrule
  \end{tabular}
\end{table}

\subsection{Interpreting \alg{'s} Performance} \label{SubSubSec_Efficiency}

To better understand \textbf{the efficiency gains} in \alg, we analyze the effect of the significant token overlap amongst candidate items in the set \(\sK_{\vq}\) for a given query \(\vq\). We trained an ANCE~\citep{Xiong20} dense retriever on the same train set as above. For a query $\vq \in \sQ_{te}$, recall that $\sK_{\vq} = \{\vk_1, \vk_2, \ldots,\vk_{N}\}$ are the top $N$ (=100 in our experiments) items retrieved using ANCE. Let $\sT^L_{\vk_j}$ denote word-piece tokens in $\vk_j$ with a max-length of $L$. Let \(\sT_U\) denote the union of all tokens of items \(\vk_j \in \sK_{te}\). We compute the following  statistics:
\begin{align}
    m~\text{(mean total tokens)} & =  \frac{1}{|\sQ_{te}|} \sum_{\vq \in \sQ_{te}} \left(\sum_{j=1}^{N} |\sT^L_{\vk_j}|\right),~\text{and} \\
    N_u~\text{(mean size of the set \(\sT_U\))} & = \frac{1}{|\sQ_{te}|} \sum_{\vq \in \sQ_{te}} \left( \left| \bigcup_{j=1}^{N} \sT^L_{\vk_j}\right|\right).
\end{align}
Intuitively, \(m\) is the the sum of item token lengths on average, while \(N_u\) is the cardinaltity of the union set, averaged over the queries \(\vq\) for which the candidate items \(\sK_{\vq}\) were obtained. We hypothesize that, if the items \(\vk_i\) have significant overlap, \(m \gg N_u\). Statistically, the union size \(N_u\) is found to be at least 5$\times$ smaller than \(m\), indicating high redundancy, and correlates with observations on the \textit{sponsored search case study} (cf. Table~\ref{table_motivation} (b) in Appendix). \par

To further analyze this effect, 
we characterize the \textbf{time complexity} of \alg in terms of number of query tokens $L_{\vq} = \left|\sT_{\vq}\right|$, number of item tokens $L_{\vk} = \left|\sT_{\vk}\right|$, number of transformer layers $L$, number of candidate items $N$, and item union compression factor $C$ (approximated as $\frac{L_{\vk} \times N}{|\sT_{U_\vq}|}$). The time complexity for scoring all $N$ items jointly is
 $(L_{\vq} + L_{\vk}N/C)^2L$.
On the other hand, for the standard cross-encoder, the corresponding time complexity is
\(
    (L_{\vq}+L_{\vk})^2LN.
\)
In practice, the inference time depends on factors such as the implementation of the model, the hardware employed, and optimizations used (such as quantization). For the sponsored search setting considered in Section~\ref{sec:exps}, assuming $L_{\vq} \approx L_{\vk}$, letting $N = 100$, the maximum tokens in an item is 24 and maximum item union length used is 220, we have $C \approx (24*100)/220 \approx 11$. Then, the standard cross-encoder time complexity is \(400L_q^2L\), while that of \alg is \( 101.8L_q^2L\), which is 3.9$\times$ lower. These inference time gains are also reflected in the latency comparison of \alg against pointwise approaches such as monoBERT (cf. Table~\ref{table_latency}).

%%%%%%%%%%%%%%%%%%%%%%%%%%%%%%%%%%%%%%%%%%%%%%%%%%%

\subsection{Ablations}
\label{sec:ablations}

\textbf{Loss Function}: We demonstrate performance comparison of pointwise and listwise loss functions with \alg architecture in Table~\ref{table_ablations}. While listwise loss functions (Cross-Entropy and ListNet~\citep{cao2007learning}) either perform similar to or slightly better than pointwise losses (such as Binary Cross-Entropy), \alg trained with \rplshort performs much better than any pointwise or listwise loss.\\ 

\noindent \textbf{Incorporating Token Sequence Order Information}: \alg encodes the union of tokens from all ranking items, which enables fast inference by reducing the input sequence length. However, this also discards the original token order information within each item, treating them as bags of tokens. Though in our preliminary experiments, we observe that the loss of sequence information in the encoder has a negligible impact on the accuracy particularly in short-text ranking tasks (within 1\%, refer Appendix~\ref{sup:ablations_ads}).
However, token sequence information could be crucial in many ranking scenarios, and could get completely ignored in \alg. To address this limitation, we propose an extension of \alg that leverages sinusoidal positional encodings~\citep{vaswani2017attention} to inject the item sequence information back into the model via the selective pooling layer. The key idea is to preserve the original position indices of the tokens for each item in the ranking list, remove them during the encoding process, and then add them back to the corresponding token context vectors after the \alg encoder. The positional encodings are computed based on the original position indices and are summed with the token context vectors. The resulting vectors are then pooled together for classification as described in Section~\ref{sec:method}. This simple yet effective technique improves the MRR@10 by ~0.2\% on the MS MARCO dataset (cf. Table~\ref{table_ablations} (b)), without any additional latency. This technique could also be useful for improving \alg performance on long texts, which we leave for future work. 

\section{Conclusion: Limitations and Future Work}
\label{sec:conclusion}

We introduced \alg, an accurate and efficient approach for joint ranking of a set of short-text items for a given query based on relevance. 
\alg effectively addresses the two major challenge in existing ranking architectures -- sub-optimal accuracy due to pointwise inference, and significantly higher computational cost. 
Our extensive evaluations on publicly available ranking benchmarks as well as large-scale sponsored search datasets reveal that \alg significantly outperforms the baselines, establishing a new state of the art. The scope of this work is primarily focused on the ranking of short texts, a common requirement in both industrial Sponsored Search applications and academic benchmarks, including tasks like matching queries with webpage titles and ranking duplicate questions. 
While the current work demonstrates significant gains on such short-text ranking tasks, the proposed approach could be adapted for long-text ranking by incorporating post-hoc positional encodings (cf.  Section~\ref{sec:ablations}) and more sophisticated attention mechanisms that account for longer document lengths. 
Exploring these adaptations is an area for future research.
\alg opens up new directions for designing accurate ranking architectures and algorithms, accounting for application-specific constraints.

\section{Ethical Considerations}

Our data usage and service provision practices have received approval from our legal and ethical boards. Socially, our research is significantly enhancing the efficiency and user experience for millions of people searching for goods and services online. This improvement is crucial in today’s context, as it enables contactless and time-efficient purchasing and delivery. Additionally, our work is boosting the revenue of numerous small and medium-sized businesses by expanding their market reach and lowering customer acquisition costs.
\balance

\bibliographystyle{ACM-Reference-Format}
\balance
\bibliography{references}

\clearpage
\appendix
\onecolumn

\section{Proofs of Lemma~\ref{lemma:rpl} and Corollary~\ref{corr:rpl}} \label{app:proofs}

To derive the proof, we recall both  Lemma~\ref{lemma:rpl} and the \rpllong:
\begin{equation*}
    \cL^{\rplshort} = -\sum_{i=1}^{|\sQ_{tr}|}\sum_{j=1}^{N}\left(\sum_{k \in \sL_{j}}  [\vy_i]_k\right)\log\left(\mathrm{SoftMax}\left(\sum_{k \in \sL_{j}} [\vf_{\vq_i}]_{k} \right)\right),
\end{equation*}
where \(\tilde{s}_{ij} = \sum_{k \in \sL_{j}} [\vf_{\vq_i}]_j\) is the modified score associated with the query-item pair \((\vq_i,\vk_j)\). The following Lemma sheds light on the relationship between \rplshort and ranking probability distribution.
\begin{lemma*}
    (\textbf{\rpllong}) Let \(\vP \in \sR^{N\times N}\) denote a matrix with entries \(p_{jk}\) given by \(p_{jk}~=~\mathrm{Prob}\left(\text{ranking item \(\vk_j\) at location $k$}\right) \triangleq C \sum_{\ell \in \sL_{k}} [\vf_{\vq_i}]_{\ell}\), where \(C\) is a normalizing constant. Then, the \rpllong~maximizes the probability of ranking queries \(\sK_{\vq_i}\) in the ordering of the ground-truth ranking \(\vy_i\).
\end{lemma*}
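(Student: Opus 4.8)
The plan is to connect the $\rplshort$ objective (\Eqref{eqn:rlp}) to a divergence between two ranking-probability matrices, and then argue that minimizing this divergence is equivalent to forcing the predicted ranking distribution onto the ground-truth ordering. Concretely, I would first make precise the object $\vP$: for the predicted logits $\vf_{\vq_i}$ (assumed w.l.o.g.\ in $[0,1]^N$), set $p_{jk} \triangleq C\sum_{\ell \in \sL_k}[\vf_{\vq_i}]_\ell$, where the normalizing constant $C$ is chosen so that the relevant rows (or columns) sum to one, turning $p_{jk}$ into a bona fide probability of placing item $\vk_j$ at rank position $k$. In parallel, define $\vP^{*}$ by the same formula with the ground-truth scores $\vy_i$ in place of $\vf_{\vq_i}$, i.e.\ $p^{*}_{jk} \propto \sum_{\ell\in\sL_k}[\vy_i]_\ell$. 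The first step is just bookkeeping: verify that both $\vP$ and $\vP^{*}$ are well-defined stochastic objects under the normalization, and that the $\sL_k$-sums appearing in $\rplshort$ are exactly the unnormalized entries of these matrices.

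Next I would write down the KL divergence (equivalently, cross-entropy up to a $\vP^{*}$-only constant) between the ground-truth ranking distribution and the predicted one, summed over queries and over positions:
\[
\sum_{i}\sum_{j} \KL\!\left(p^{*}_{j\cdot}\,\middle\|\,p_{j\cdot}\right)
= -\sum_i\sum_j\sum_k p^{*}_{jk}\log p_{jk} + \text{const}.
\]
Substituting the definitions, $p^{*}_{jk} \propto \sum_{k'\in\sL_k}[\vy_i]_{k'}$ and $p_{jk}\propto \mathrm{SoftMax}\big(\sum_{k'\in\sL_k}[\vf_{\vq_i}]_{k'}\big)$ (the $\mathrm{SoftMax}$ being precisely the normalization that produces $C$), the cross-entropy term collapses term-by-term to $\rplshort$ as given in \Eqref{eqn:rlp}, up to the additive constant and an overall positive scaling that does not affect the minimizer. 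This identification is the technical heart of the argument, and it is essentially the same manipulation that underlies the ListNet top-one cross-entropy; I would lean on Corollary~\ref{corr:rpl} here, noting that the modified scores $\sum_{k\in\sL_j}[\vf_{\vq_i}]_k$ and modified targets $\tilde y_{ij}=\sum_{k\in\sL_j}[\vy_i]_k$ are exactly the quantities feeding the two distributions.

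Finally, I would close the loop: since KL divergence is nonnegative and vanishes iff the two distributions coincide, minimizing $\rplshort$ drives $\vP \to \vP^{*}$, i.e.\ drives the model's predicted rank-placement probabilities toward those induced by $\vy_i$; because $\vP^{*}$ is monotone in the ground-truth scores by construction of $\sL_k$, the distribution $\vP^{*}$ concentrates mass on the permutation that sorts $\sK_{\vq_i}$ in the order of $\vy_i$, which is the claimed conclusion that $\rplshort$ maximizes the probability of the ground-truth ordering. The main obstacle I anticipate is not any single inequality but the precise choice of normalization: one must check that the same constant $C$ (equivalently, the $\mathrm{SoftMax}$ denominator) makes both the row-sum-to-one interpretation valid \emph{and} makes the cross-entropy of $\vP^{*}$ against $\vP$ reduce cleanly to the written loss, rather than to a loss with extra cross terms among different $j$. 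A secondary subtlety is the w.l.o.g.\ assumption $\vf_{\vq_i}\in[0,1]^N$: I would justify it by noting that any affine rescaling of logits that keeps them nonnegative leaves the induced ordering — and hence the ranking-probability interpretation — unchanged, so the assumption costs nothing.
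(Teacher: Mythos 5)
Your overall strategy --- interpret the $\sL_k$-sums as unnormalized entries of two rank-placement matrices $\vP$ and $\vP^{*}$, and cast \rplshort as a cross-entropy/KL between them --- is the same route the paper takes. However, there is a genuine gap at precisely the step you flag as your ``main obstacle'' and then leave unresolved: the full-matrix divergence you write down,
\begin{equation*}
-\sum_i\sum_{j=1}^{N}\sum_{k=1}^{N} p^{*}_{jk}\log p_{jk},
\end{equation*}
is a \emph{triple} sum, whereas $\cL^{\rplshort}$ in \Eqref{eqn:rlp} is a \emph{double} sum containing only one term per item $j$. These do not ``collapse term-by-term'' into one another: the entry $p_{jk}=C\sum_{\ell\in\sL_k}[\vf_{\vq_i}]_\ell$ depends on $j$ only through its interpretation (the formula involves only the column index $k$), so the off-diagonal terms contribute $N-1$ extra summands per $j$ that are not present in the written loss. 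Asserting the identification without disposing of these terms leaves the central claim unproven.

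The paper closes exactly this hole by arguing that $\vP$ carries no information beyond its diagonal: writing $p_{jk}\approx\mathrm{Prob}\bigl(f_j>\sum_{\ell\in\sL_k}f_\ell\bigr)$, it shows $p_{jk}=0$ whenever $k$ is ranked above $j$, and that for $k$ ranked below $j$ a conditioning argument gives $p_{jk_2}=p_{jk_1}$, so minimizing the distance between $\vP$ and $\vP^{*}$ reduces to matching the diagonal vectors $\vp=\mathrm{Diag}(\vP)$ and $\vp^{*}=\mathrm{Diag}(\vP^{*})$ --- and the cross-entropy between \emph{those} $N$-vectors is literally $\cL^{\rplshort}$. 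To repair your argument you would either need to reproduce this diagonal-reduction step, or restate your divergence from the outset as a single KL between the two $N$-dimensional distributions $\bigl(p_{jj}\bigr)_{j=1}^{N}$ and $\bigl(p^{*}_{jj}\bigr)_{j=1}^{N}$ rather than a sum of row-wise KLs over the whole matrix. Your closing observation --- that $\vP^{*}$ is monotone in the ground-truth scores so the minimizer recovers the ground-truth ordering --- and your remark justifying the normalization to $[0,1]^N$ are both consistent with the paper's reasoning.
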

\begin{proof}
Given $\vf_{\vq_i}$, the ground-truth scores \(\vy_i\), and a candidate item \(\vk_j\), we define the set $\sL_j = \{k \in \{1, N\}: [\vy_i]_k < [\vy_i]_j \}$, \textit{i.e.,} \(\sL_j\) comprises the indices \(k\) for which the ground truth score of \([\vy_i]_j\) is larger than the ground truth score at location \(k\). Additionally, without loss of generality, we assume that $\vf_{\vq_i}$ and \(\vy_i\) are normalized to have entries that lie in \([0,1]\). We have:
\begin{align*}
p_{jk}~=~\mathrm{Prob}\left(\text{ranking item \(\vk_j\) at location $k$}\right) \triangleq C \sum_{\ell \in \sL_{k}} [\vf_{\vq_i}]_{\ell},
\end{align*}
Where \(C\) is an appropriately chosed constant, such that \(\vP\) is a matrix with entries summing to one. Further, in the context of ranking with the logits/scores, we have:
\begin{align*}
p_{jk}&=\mathrm{Prob}\left(\text{ranking item \(\vk_j\) at location $k$}\right) \\
&= \mathrm{Prob}\left(\text{logit of item \(\vk_j\) > \{logits of all items $\vk_k$ such that $k\in \sL_k$ \}} \right).
\end{align*}
Since the entire analysis is presented in the context of a each query \(\vq_i\), for convenience, we abuse notation, and denote \([\vf_{\vq_i}]_j = [\vf]_j = f_j\). Then, we have
\begin{align}
p_{jk} &= \mathrm{Prob}\left(\text{logit of item \(\vk_j\) > \{logits of all items $\vk_k$ such that $k\in \sL_k$ \}} \right) \nonumber \\
&= \mathrm{Prob}\left(\text{\(f_j\) > \{\(f_k\) for all $k\in \sL_k$} \} \right) \nonumber \\ 
&= \mathrm{Prob}\left(f_j > \max_{k\in \sS_k}\{f_k\} \right) \nonumber  \\
& \approx \mathrm{Prob}\left(f_j > \sum_{\ell\in \sL_{k}}f_{\ell} \right), \label{eqn:App_ProbCond}
\end{align}
where the last step is a consequence of the norm-bound \(\|\vf\|_1 \leq \|\vf\|_{\infty} \). First, we note that, given the condition in \Eqref{eqn:App_ProbCond}, for all \(k\) ranked higher than \(j\), \(p_{jk} = 0\). Further, for all \(k\) ranked lower than \(j\), it suffices to show that the probability \(p_{jk}\) is non-zero, for \(j=k\). To verify this, consider two probabilities \(p_{i{k_1}}\) and \(p_{i{k_2}}\), with \(k_1\) ranked higher than \(k_2\). Let \(\bm{1}_{i{k_1}}\) denote the indicator of the event associated with \(p_{i{k_1}}\). We have
\begin{align*}
p_{i{k_2}} \approx \mathrm{Prob}\left(f_j > \sum_{\ell \in \sL_{k_2}}f_{\ell}~\Big\vert~ \bm{1}_{i{k_1}} \right) \mathrm{Prob}(\bm{1}_{i{k_1}}) = p_{i{k_1}}
\end{align*}
To build intuition for the this, assume without loss of generality that \(\vf\) have been sorted in a non-increasing manner. Let \(j = 1\) \( k_1 = 1\) and \(k_2 = 3\). Then, it is clear to see that 
\begin{align*}
&\mathrm{Prob}\left(f_1 > f_4 + f_5 + \ldots ~\Big\vert~ f_1 > f_2 + f_3 + f_4 + f_5 + \ldots \right)\mathrm{Prob}\left(f_1 > f_2 + f_3 + \ldots \right) \\
&= \mathrm{Prob}\left(f_1 > f_2 + f_3 + \ldots \right).
\end{align*}
Given \(\vP\) built as described above, and \(\vP^*\) defined similarly over scores \(\vy\), from the discussion above, we see that minimizing the distance between \(\vP\) and \(\vP^*\) is equivalent to minimizing the distance between the vectors \(\vp = \mathrm{Diag}\left(\vP\right)\) and \( \vp^* = \mathrm{Diag}\left(\vP^*\right)\). We observe that, when normalized, the entries of \(\vp\) correspond to the probability of ranking item \(\vk_j\) and rank \(j\), given sorted vectors \(\tilde{\vf}\) and \(\tilde{\vy}\). Then, the \rplshort is the binary cross entropy loss defined between \(\vp\) and \(\vp^*\), and represents minimizing the KL-divergence between the predicted ranking probability distribution \(\tilde{\vf}\) and the ground-truth ranking distribution \(\tilde{\vy}\), up to a normalizing constant factor. This completes the proof of Lemma~\ref{lemma:rpl}.
\end{proof}
\textbf{Proof of Corollary~\ref{corr:rpl}}: To link the \rpllong to the ListNet loss~\citep{cao2007learning}, we recall that:
\begin{align*}
    \cL^{LN} = - \sum_{i=1}^{|\sQ_{tr}|}\sum_{j=1}^{N} P_{\vy,j} \log \left(P_{\vs,j}\right),
\end{align*}
where \(\displaystyle P_{\vx,j} = \frac{\Phi\left([\vx]_j\right)}{\sum_{\ell=1}^{N}\Phi\left([\vx]_{\ell}\right)}\)
Setting \([\tilde{\vs}_{\vq_i}]_j = \tilde{s}_{ij} = \sum_{{\ell} \in \sL_{j}} [\vf_{\vq_i}]_{\ell}\),~and~\([\tilde{\vy}_{i}]_j = \tilde{y}_{ij} = \sum_{{\ell} \in \sL_{j}} [\vy_{i}]_{\ell}\) and a \(\Phi\) that result in mapping \(\tilde{\vs}\) and \(\tilde{\vy}\) to be valid probabilities distributions, we see the equivalence between the ListNet loss, and the proposed \rpllong. 

\section{The \alg Inference} \label{sup:algo} 
The procedure for forward pass of a \alg model during training as well as inference is outlined in Algorithm \ref{algo-forward-pass-crossjem}, while the procedure to obtain the attention map over the item union set \(\sT\) is described in Algorithm~\ref{algo-get-kumaps}.

%%5 CJ++ Algo; Depecated

\begin{algorithm}
\caption{Method to create attention masks for item $k_{i}$. \textbf{Input}: $T_{q}$: Tokenized Query, $T_{U}$: Tokens in item Union Set, $k$: tokenized $k_{i}$ keyword. \textbf{Output}: AttMask: Attention Mask for the keyword}
\label{algo-get-kumaps}
\begin{algorithmic}[1]
\Procedure{GetKUAttentionMask}{$T_{q},T_{U},k$}  
    \State AttMask $\gets []$
    \For{$i$ from $0$ to LEN($T_{q}$) - 1}
        \State AttMask.\Call{AddItem}{1}
    \EndFor
    \For{$i$ from $0$ to LEN($T_{U}$) - 1}
        \If{$T_{U}[i]$ in $k$}  
            \State AttMask.\Call{AddItem}{1}
        \Else
            \State AttMask.\Call{AddItem}{0}
        \EndIf
    \EndFor
    \State \Return AttMask
\EndProcedure
\end{algorithmic}
\end{algorithm}

\begin{algorithm}
\caption{Getting relevance scores for a query and retrieved set of items using CROSS-JEM. \textbf{Input}: Query $q$, retrieved set of $N$ items for $q$ where $K_{q} = \{k_0, k_1, ..., k_{N-1}\}$. \textbf{Output}: Scores $S= \{s_0, s_1, ..., s_{N-1}\}$}
\label{algo-forward-pass-crossjem}
\begin{algorithmic}[1]

\State  $T_q \gets \Call{Tokenize}{Q}$ \Comment{Tokenize the query}
\State  $T_U \gets \{\}$ 
% \Comment{Assign an empty set to the token union}
\State  $K_{tokens} \gets \left[\right]$ \Comment{Store tokenized items}

\For{$k$ in $K_{q}$}
    \State $k_{tokens} \gets \Call{Tokenize}{k}$
    \State $T_{U} \gets \Call{Union}{T_U, k_{tokens}}$
    \State  $K_{tokens} . \Call{AddItem}{k_{tokens}}$
\EndFor

\State $T_{U} \gets \Call{Sorted}{T_U}$
\State KUAttMask $ \gets \left[\right]$ \Comment{KUAttMask: item Union Attention Mask}

\For{$i$ from $0$ to $N-1$}
    \State AttMask $\gets \Call{GetKUAttentionMask}{T_q, T_U, K_{tokens}[i]} $ (cf. Algorithm~\ref{algo-get-kumaps})
    \State KUAttMask.\Call{AddItem}{AttMask}
\EndFor

\State sepToken $\gets \Call{Tokenize}{\left[SEP\right]}$ \Comment{Token id for [SEP] token}
\State encInpToks $\gets T_Q $ \Comment{Tokens to be passed through the Encoder}
\State encInpToks.\Call{AddItem}{sepToken} 

\For{$t_U$ in $T_{U}$}
    \State encInpToks.\Call{AddItem}{$t_U$}
\EndFor

\State $E \gets$ \Call{Encoder}{encInpToks}  
% \Comment{$E \in \mathbb{R}^{NumTokens \times d}$}

\State $S \gets$ \Call{SelectivePooling}{$E$, KUAttMask} \Comment{$S \in \mathbb{R}^{N \times d}$}

\State $S  \gets$ \Call{Classifier}{$S$}  \Comment{$S \in \mathbb{R}^N$}

\State \Return $S$
\end{algorithmic}
\end{algorithm}

\section{Datasets}
\label{sup:dataset}
\textbf{MS MARCO}: MS MARCO document re-ranking dataset contains queries and their clicked web pages consisting of webpage title, URL, and passage. We create a \'short-text\' version of the dataset by considering only the titles of the webpages, making the length statistics of the dataset better aligned with the real-world applications of ranking in sponsored search. We experiment with HDCT~\citep{dai2020context} retriever based training dataset consisting of 0.37M training queries and top 10 predictions from HDCT along with their ground truth click labels. We use the dev set to report our metrics as this set has ground truth labels available for evaluation. We use $\sim$3.7M training pairs available in MS MARCO HDCT dataset sourced from~\citep{dai2020context} as described above for training \alg and baselines using DistilBERT~\citep{sanh2019distilbert}. The target scores for all training pairs are obtained from a \sce model trained on binary ground truth click data with BERT-Base~\citep{bert-base} as the base encoder for MS MARCO dataset. 

\textbf{SODQ}: Stack Overflow Duplicate Questions dataset involves ranking questions on Stack Overflow as duplicates or not with the tags Java, JavaScript and Python ~\citep{LinkSo}. It is also one of the re-ranking datasets on the popular MTEB benchmark~\citep{muennighoff2022mteb}. StackOverflowDupQuestions on MTEB benchmark is the only shoft text re-ranking dataset with both training and evaluation data available, and is hence used in our experiments. Similar to our experiments on MS MARCO dataset, the target scores for training \alg as well as baselines are obtained from a BERT~\citep{bert-base} based \sce model trained on binary relevance of duplicate questions. 

\textbf{Sponsored Search Dataset}: The training dataset for sponsored search query to advertiser matching task is created using a BERT-Large based \sce model trained on manually labeled and good-click data as the teacher model. 
A query-item (advertiser bid keyword) pair in the good click data is obtained when the user clicked on the ad corresponding to an advertiser keyword in response to their query, and did not close the ad quickly indicating they found it relevant. 
This BERT-Large teacher model was used to score 100 predicted items each for 18.6M queries on the search engine during a time period. This resulted in around 1.8B query-item pairs with scores in 0 to 1 range as training data for \alg{} and all baselines in Table~\ref{table_ads_main}. 

\section{Metrics}
\label{sup:metrics}
\begin{itemize}[leftmargin=*]
    \item \textbf{Mean Average Precision (MAP)}: This is a ranking metric defined as the mean of Average Precision (AP) over the positive and negative detected classes: 
    \begin{equation}
    \frac{1}{|Q|} \sum_{u=1}^{|U|} \left(\frac{1}{m} \sum_{k=1}^{N} P_u(k) \cdot rel_u(k)\right)
    \end{equation}
    where $|Q|$ is the total number of queries, $P_u(k)$ is the precision at cut-off $k$ in the list, $rel_u(k)$ is an indicator function equaling 1 if the item at rank $k$ is a relevant document, otherwise zero.

    \item \textbf{Mean Reciprocal Rank (MRR)}: Rank is defined as the position of the first relevant item in the ranked list. MRR is hence defined as below:
    \begin{equation}
        \frac{1}{|Q|} \sum_{i=1}^{|Q|} \left( \frac{1}{rank_i}\right)
    \end{equation}

    \item \textbf{Accuracy}: Positive, Negative, and Overall Accuracy denote the proportion of positive, negative, and overall instances, respectively, in the test set that are accurately identified.
    \item \textbf{Area Under the ROC Curve (AUC-ROC)}: The ROC curve is a plot of True Positive Rate (TPR) or sensitivity against False Positive Rate (FPR) at different thresholds.
\end{itemize}

\section{Baselines and Hyperparameters}
\label{sup:hyperparams}
For baselines monoBERT, DPR, and ANCE, we tune the following hyperparameters based on the validation set accuracy: learning rate, weight decay, number of epochs. ColBERT is trained and evaluated with default hyperparameters provided by the authors\footnote{https://github.com/stanford-futuredata/ColBERT}. We use the pre-trained checkpoint\footnote{https://huggingface.co/hkunlp/instructor-base} and code-base\footnote{https://github.com/xlang-ai/instructor-embedding} provided by the authors for INSTRUCTOR model, and test its zero-shot performance using the instructions mentioned in the paper. \alg is trained with exactly same setting as \sce: learning rate of 1e-4, linear learning rate scheduler, and AdamW optimizer. Hyperparameters specific to \alg ($N$ and $L_u$) are provided in Table~\ref{table_hyperparams}.
The metrics for BM25 on SODQ are taken from~\citep{LinkSo}, while they are computed following ~\citet{bm25} on MS MARCO.

\textbf{Fine-tuning RankT5-base (24L) for Short-text Ranking}:
We fine-tune the model checkpoint available from ~\citet{zhang-etal-2024-two} on short-text ranking benchmarks and note the performance improvements in Table~\ref{table_rankt5}.

\setlength{\tabcolsep}{4pt}
\begin{table*}[!t]
    \caption{Improvement in performance of a large Seq2Seq model, RankT5-base after fine-tuning on short-text ranking benchmarks} % \textsuperscript{\textdagger}Denotes averaging over 15 runs}
    \label{table_rankt5}
      \centering
      % \scriptsize
    \fontsize{9}{12}\selectfont 
        \begin{tabular}{@{}c|cccc}
        \toprule
    {\textbf{Method}} & \multicolumn{2}{c}{\textbf{SODQ}} & \multicolumn{2}{c}{\textbf{MS MARCO}} \\
\midrule 
{} & MAP@5 & MAP@10 & MRR@5 & MRR@10\\
\midrule
\textbf{RankT5-base (pre-trained)} & 45.66 & 49.47 & 27.87 & 29.75 \\
\textbf{RankT5-base (fine-tuned)} & 55.9 & 56.8 & 33.72 & 35.14 \\
        \bottomrule
    \end{tabular}
\end{table*}

\subsection{Compute}
\label{sup:compute}
All baselines on MS MARCO and SODQ datasets including \alg were trained on 8 V100 GPUs. Experiments on proprietary Sponsored Search dataset were conducted on larger GPU cluster with 16 V100s.

\setlength{\tabcolsep}{4pt}
\begin{table}[!t]
    \caption{Hyperparameters used in \alg.}
    \label{table_hyperparams}
      \centering
      % \scriptsize
    \fontsize{9}{12}\selectfont
      %This will fit without resizebox.
\begin{tabular}{@{}l|c|c|c}
\toprule
\textbf{Hyperparam} & \textbf{SODQ} & \textbf{MS MARCO} & \textbf{Sponsored Search} \\
\midrule 
$N$ & {30} & {10} & {100} \\
$L_u$ & {265} & {360} & {262} \\
\bottomrule
    \end{tabular}
\end{table}

\section{Experiments: Sponsored Search Dataset}\label{App:AdsExp}
We compare \alg against methods that can be deployed for real-time ranking including ANCE, MEB, and TwinBERT. 
TwinBERT is a lighter version of ColBERT. 
It applies an MLP layer to individual query and keyword embeddings, unlike ColBERT which considers interactions along all token embeddings. This makes TwinBERT more efficient and practical for real-world systems due to lower storage requirements.
Table~\ref{table_ads_main} shows the comparison of \alg with baseline methods in production where \alg outperforms existing methods by large margins.
\label{sup:ablations_ads}

\textbf{Ablation on Number of items per Query ($N$)}: From Table \ref{table_abl_k}, we vary the number of items scored per query from 10 to 100. With more items, the token overlap increases, providing \alg more opportunity for joint modeling. Correspondingly, we observe gains in negative accuracy and AUC as items per query increase.

\textbf{Ablation on Encoder $\cE_{\vtheta}$ in \alg}: 
Table \ref{table_abl_trans} shows that even with a smaller encoder, CROSS-JEM provides significant accuracy gains over sparse models like MEB while having low latency.

\textbf{Ablation on Sequence Information}: We analyze the effect of the ordering/sequencing of the item text on classification accuracy using a cross-encoder model; and a train dataset consists of about 100M query-item pairs \((\vq,\vk)\), drawn from $\sQ_{tr} \times \sI_{tr}$, mined from proprietary search engine logs. Given \((\vq,\vk)\), we compare two cross-encoder \(\gE\) models: 1) $\gE_{CE}$: Standard cross-encoder scoring the pairs $(\vq, \vk)$, and 2) $\gE^{\prime}_{CE}$: Cross-encoder trained to score pairs $(\vq, \vk^{\prime})$, where the item $\vk^{\prime}$ is obtained by sorting the tokens in $\vk$ alphabetically.
  
The hypothesis is that, if the cross-encoders $\gE_{CE}$ and $\gE_{CE}^{\prime}$ have similar scoring accuracy, then the sequence ordering is relatively less informative for this task. While testing $\gE_{CE}^{\prime}$ is evaluated on \((\vq,\vk^{\prime})\) pairs \(\vk^{\prime}\) is drawn from $\sI_{te}$, with its tokens sorted alphabetically. Table~\ref{table_motivation}(a) shows how the variants perform on a test set of 10M pairs. We observe that, when the sequence information in \(\vk\) is discarded, both the mean average precision (MAP) and accuracy are within 1\% of the case when the sequence information is retained.

% \setlength{\tabcolsep}{4pt}
% \begin{table}[!t]
%     \caption{Experiment to validate Assumption 1 described in Section \ref{subsec:a_2}. Cross-encoder trained with sorted tokens for all items $(CE')$ has similar performance as a standard cross-encoder with ordered tokens $(CE)$.}
%     \label{table_motivation1}
%       \centering
%       % \scriptsize
%         \fontsize{9}{12}\selectfont
%       %This will fit without resizebox.
% \begin{tabular}{@{}l|c|c}
% \toprule
% \textbf{Algorithm} & \textbf{MAP@100} & \textbf{Overall Accuracy (\%)}\\
% \midrule
% CE & 93.03 & 94.11 \\
% CE' & 92.76  & 93.83 \\
% \bottomrule
%     \end{tabular}
% \end{table}

% \setlength{\tabcolsep}{4pt}
% \begin{table}[!t]
%     \caption{Sponsored search dataset statistics to validate Assumption 2 described in Section \ref{subsec:a_2}. Here, m is the mean tokens in a retrieved item set, while n is the mean tokens in union of tokens for a item set.}
%     \label{table_motivation2}
%       \centering
%       % \scriptsize
%     \fontsize{9}{12}\selectfont
%       %This will fit without resizebox.
% \begin{tabular}{@{}l|c|c}
% \toprule
% \textbf{Max-length in word-piece ($L$)} & \textbf{m} & \textbf{n}\\
% \midrule
% 12 & 498.38 & 93.70 \\
% 16 & 499.34  & 94.02 \\
% 32 & 501.52  & 94.60 \\
% \bottomrule
%     \end{tabular}
% \end{table}

\setlength{\tabcolsep}{4pt}
\begin{table}[!t]
    \caption{Sponsored search dataset statistics motivating \alg architecture design. (a) Cross-encoder trained with ordered tokens $(\gE_{CE})$ and alphabetically sorted tokens for all items $(\gE^{\prime}_{CE})$. The small performance delta (between rows) indicates that sequence ordering is not critical for short-text. (b) The mean total tokens \(m\) in a retrieved item set \(Q_{t_e}\) and the mean size \(N_u\) of tokens in the union of \(Q_{t_e}\). The ratio of \(m\) to \(N_u\) being \(\sim5\) indicates strong token overlap.}
    \label{table_motivation}
      \centering
      % \scriptsize
    \fontsize{9}{12}\selectfont
      %This will fit without resizebox.
\begin{tabular}{@{}c|c}
\toprule
\textbf{Algorithm} & \textbf{MAP@100} \\[6pt]
\midrule
$\gE_{CE}$ & 93.03 \\[6pt]
$\gE^{\prime}_{CE}$ & 92.76 \\
\bottomrule
\multicolumn{2}{c}{(a)}
    \end{tabular}
    %~~\vline~~
    \hspace{0.5cm}
\begin{tabular}{@{}c|c|c}
\toprule
\textbf{Max-length in word-piece ($L$)} & {\(\mathbf{m}\)} & \textbf{\(\mathbf{N_u}\)}\\
\midrule
12 & 498.38 & 93.70 \\
16 & 499.34  & 94.02 \\
32 & 501.52  & 94.60 \\
\bottomrule
\multicolumn{3}{c}{(b)}
\end{tabular}
\end{table}

\setlength{\tabcolsep}{4pt}
\begin{table}[!t]
    \caption{Variation in accuracy on varying the number of items scored per query by \alg. We observe only minor variation in changing the number of items to be ranked at inference time. This observation is useful in real-world ranking which receive item candidates from a set of retrieval algorithms and hence the number of items to be scored can vary with the query.}
    \label{table_abl_k}
      \centering
      % \scriptsize
    \fontsize{9}{12}\selectfont
      %This will fit without resizebox.
\begin{tabular}{@{}l|ccc}
\toprule
\textbf{$N$} & \textbf{Negative Accuracy} & \textbf{Overall Accuracy} & \textbf{AUC} \\
\midrule 
10 & 99.18 & 94.94 & 99.24 \\
20 & 99.37 & 94.96 & 99.34 \\
50 & 99.52 & 94.82 & 99.40 \\
80 & 99.56 & 94.71 & 99.51 \\
100 & 99.45 & 95.27 & 99.42 \\
\bottomrule
    \end{tabular}
\end{table}

\setlength{\tabcolsep}{4pt}
\begin{table}[!t]
    \caption{Variation in accuracy with base encoder $\cE_{\vtheta}$ in \alg. 
    }
    \label{table_abl_trans}
      \centering
      % \scriptsize
      \fontsize{9}{12}\selectfont
\begin{tabular}{@{}l|cccc}
\toprule
\textbf{Encoder} & \textbf{Negative Accuracy} & \textbf{Overall Accuracy} & \textbf{AUC} & \textbf{Latency CPU} \\
\midrule 
TinyBERT - 2 layer & 96.52 & 92.76 & 96.67 & 114.3 \\
DistilBERT - 6 layer & 99.45 & 95.27 & 99.61 & 744.7 \\
\bottomrule
    \end{tabular}
\end{table}
\newpage
\section{Qualitative Analysis}
\setlength{\tabcolsep}{4pt}
\begin{table*}[!htb]
    \caption{A comparison of the top-5 ranked items obtained using \alg{'s} listwise modeling and a pointwise ranking model ~\citep{nogueira2020passage}. Relatively more \textit{generic} (but still relevant) items are ranked higher in baseline predictions, owing to their frequency in training data, token-level matching and other biases, which are circumvented when using listwise architectures capable of evaluating all the shortlisted items in a single forward pass, and rank the more relevant (and specific) items higher.}
    \label{table_AblCompares}
      \centering
      % \scriptsize
    \fontsize{6.5}{10}\selectfont
      %This will fit without resizebox.
    \begin{tabular}{P{.245\linewidth}|P{.38\linewidth}P{.31\linewidth}}
      \toprule

      \textbf{Query} & \textbf{Top-5 Ranked Item in the Proposed Approach (\alg)}  & \textbf{Top-5 Ranked Item in a Baseline Cross Encoder}   \\
      \midrule
      \multirow{5}{*}{different foods of oaxaca mexico} & \good{`the foods of oaxaca'} & \meh{`culinary tales: the kinds of food mexicans eat'} \\
      & \good{`oaxacan cuisine'} & \meh{`mexican christmas foods'} \\
      & \good{`exploring oaxacan food'} & \meh{`mexican cuisine'} \\
      & \good{`authentic recipes from oaxaca, mexico'} & \meh{`culture: food and eating customes in mexico'} \\
      & \meh{`6 things you'll love about oaxaca'} & \meh{`popular food in mexico'}  \\
      \midrule  
      \multirow{5}{*}{what is the bovine growth hormone} &  \good{`recombinant bovine growth hormone'} & \meh{`growth hormone'} \\
      &  \good{`what is rbgh?'} & \bad{`human growth hormone'} \\
      & \good{`rbgh'} & \bad{`alternative names for growth hormone'} \\
      & \meh{`bovine growth hormone and milk : what you need to know'} & \bad{`human growth hormone and insulin are friends'} \\
      & \good{`what is rbst?'} & \bad{`growth hormone (somatotropin)'} \\
      \midrule  
      \multirow{5}{*}{what is the state nickname of new mexico} & \good{`what are the nicknames of the state of new mexico?'} & meh{`the state of new mexico'} \\
      & \good{`state nicknames new mexico - south carolina and their explanation'} & meh{`state of new mexico'} \\
      & \good{`what is new mexico's nickname?'} & \meh{`new mexico'} \\
      & \good{`new mexico state names (etymology of names)'} & \bad{`new mexico state university'} \\
      &  \meh{`the state of new mexico'} & \bad{`state of mexico'} \\
      \midrule   
      \multirow{5}{*}{is the elliptical bad for your knees} & \good{`does an elliptical make bad knees worse?'} & \meh{`what exercises can help relieve knee pain?'} \\
      & \good{`are the elliptical machines bad for your knees?'} & \bad{`4 bad exercises for bad knees'} \\
      & \good{`is an elliptical the best machine for knees that are chronically painful?'} & \bad{`how to treat a knee sprain'} \\
      & \good{`why does my knee hurt on an elliptical machine?'} & \bad{`how to strengthen legs with bad knees'} \\
      & \good{`elliptical machine is good for osteoarthritis of the knee!'} & \bad{`yoga bad for your knees, indian doctor warns'} \\
      \midrule 
      \multirow{5}{*}{what do you use for oxygen facial machines} & \good{`oxygen facials and other skincare services'} & \bad{`using oxygen safely'} \\
      & \meh{`oxygenating facial treatment'} & \bad{`the oxygen machine'} \\
      & \good{`oxygen facial : home kits, beauty benefits, side effects, process'} & \bad{`shop cpap and oxygen'} \\
      & \meh{`benefits of oxygen facial'} & \bad{`oxygen concentrators and generators'} \\
      & \meh{`4 beauty - boosting benefits of oxygen facials'} & \bad{`oxygen concentrators \& generators'} \\
      \bottomrule
  \end{tabular}
\end{table*}
\clearpage

\end{document}